\numberwithin{equation}{section}
\theoremstyle{theorem}
\newtheorem{theorem}{Theorem}[section]
\newtheorem{lemma}[theorem]{Lemma}
\newtheorem{definition}[theorem]{Definition}
\newtheorem{remark}[theorem]{Remark}
\newtheorem{assumption}[theorem]{Assumption}
\newtheorem{corollary}[theorem]{Corollary}
\def\bearray{\begin{eqnarray}}
\def\earray{\end{eqnarray}}
\def\beq{\begin{equation}}
\def\eeq{\end{equation}}
\def\b0{{\bf 0}}
\def\bC{{\mathbb C}}           
\def\bN{{\mathbb N}}
\def\bR{{\mathbb R}}
\begin{document} 

\par
\bigskip
\large
\noindent
{\bf The classical limit of mean-field quantum spin systems}
\bigskip
\par
\rm
\normalsize

\noindent {\bf Christiaan J.F.  van de Ven$^{a}$$^1$}\\
\par

\noindent 
 $^a$ Department of  Mathematics, University of Trento, and INFN-TIFPA \\
 via Sommarive 14, I-38123  Povo (Trento), Italy\\
$^1$ Marie Sk\l odowska-Curie Fellow of the Istituto Nazionale di Alta Matematica. Email:
christiaan.vandeven@unitn.it\\

\par

\rm\normalsize

\date{today}

\rm\normalsize


\par
\bigskip

\noindent
\small
{\bf Abstract}
The theory of strict deformation quantization of the two sphere $S^2\subset\mathbb{R}^3$ is used to prove the existence of the classical limit of mean-field quantum spin chains, whose ensuing Hamiltonians are denoted by $H_N$ and where $N$ indicates the number of sites. Indeed, since  the fibers  $A_{1/N}=M_{N+1}(\mathbb{C})$ and $A_0=C(S^2)$ form a continuous bundle of $C^*$-algebras over the base space $I=\{0\}\cup 1/\mathbb{N}^*\subset[0,1]$, one can define a strict deformation quantization of $A_0$ where quantization is specified by certain quantization maps $Q_{1/N}: \tilde{A}_0 \rightarrow A_{1/N}$, with $\tilde{A}_0$ a dense Poisson subalgebra of $A_0$. Given now a sequence of such $H_N$, we show that under some assumptions a sequence of eigenvectors $\psi_N$ of $H_N$ has a classical limit in the sense that $\omega_0(f):=\lim_{N\to\infty}\langle\psi_N,Q_{1/N}(f)\psi_N\rangle$ exists as a state on $A_0$ given by $\omega_0(f)=\frac{1}{n}\sum_{i=1}^nf(\Omega_i)$, where $n$ is some natural number. We give an application regarding spontaneous symmetry breaking (SSB) and moreover we show that the spectrum of such a mean-field quantum spin system converges to the range of some polynomial in three real variables restricted to the sphere $S^2$ .
\normalsize
\newpage
\section{Introduction}
The concept of strict deformation quantization has been introduced by Rieffel in \cite{Rie89,Rie94} in order to provide a mathematical formalism describing the transition from a classical theory to a
quantum theory in terms of deformations of (commutative) Poisson algebras (representing the classical theory) into non-commutative $C^*$-algebras (characterizing the quantum theory).  Concretely  a strict deformation quantization of a commutative Poisson algebra $A_0$ consists of a continuous bundle $\mathcal{A}$ of $C^*$-algebras $(A_{\hbar})_{\hbar\in I}$ over an interval $I$ along with a family of quantization maps $Q_{\hbar} : \tilde{A}_0 \to A_{\hbar}$, with $\hbar\in I$ and $\tilde{A}_0\subset A_0$ a dense Poisson subalgebra on which the the quantization maps $Q_{\hbar}$ are defined (viz. Definition \ref{def:deformationq}).  This machinery allows us to prove the existence of physical phenomena often arising in the classical theory, but which do not exist in the corresponding underlying quantum theory. A typical example of such an emergent feature is spontaneous symmetry breaking (SSB), discussed in section \ref{SSB1}. In this paper we apply this formalism to quantum spins systems where quantization of the manifold $S^2$ is exploited. In particular, we study the asymptotic relation between a classical theory on the two sphere and statistical mechanics of a quantum (mean-field) spin system on a finite lattice, as the number of lattice sites increases to infinity.  
\subsection{Strict deformation quantization}
Before giving the mathematical definitions, let us first introduce a basic example concerning the quantization of the classical phase space $\mathbb{R}^{2n}$ on which classical mechanics of a particle is described. For convenience we take the simplest functional-analytic setting in which only smooth compactly supported functions $f\in C_c(\mathbb{R}^{2n})$ (with Poisson structure given by the natural symplectic form $\sum_{j=1}^ndp_j\wedge dq^j$) are quantized. In order to relate $C_c(\mathbb{R}^{2n})$ to a quantum theory described on some Hilbert space, one needs to deform $C_c(\mathbb{R}^{2n})$ into non-commutatative $C^*$-algebras
exploiting a family of quantization maps. In this setting Berezin introduced the maps \cite{Ber}
\begin{align}
&Q_{\hbar}:  C_c^{\infty}(\mathbb{R}^{2n}) \to B_0(L^2(\mathbb{R}^n));\\
&Q_{\hbar}(f)=\int_{\mathbb{R}^{2n}} \frac{d^npd^nq}{(2\pi\hbar)^n} f(p,q) |\phi_{\hbar}^{(p,q)}\rangle \langle\phi_{\hbar}^{(p,q)}|,
\end{align}
 where  $\hbar\in (0,1]$; $B_0(H)$ is the $C^*$-algebra of compact operators on the Hilbert space $H=L^2(\bR^n)$,   and for each point $(p,q)\in \bR^{2n}$ the (projection) operator $|\phi_{\hbar}^{(p,q)}\rangle \langle \phi_{\hbar}^{(p,q)}|:L^2(\bR^n)\to L^2(\bR^n)$ is induced by the normalized wavefunctions, where $x\in \bR$,
\begin{equation}
\phi_{\hbar}^{(p,q)}(x)=(\pi\hbar)^{-n/4}e^{-ipq/2\hbar}e^{-ipx/\hbar}e^{-(x-q)^2/2\hbar}\:, \quad \phi_{\hbar}^{(p,q)}\in L^2(\mathbb{R}),
\end{equation}
defining the well-known  (Schrodinger) {\em coherent states}. Inspired by Dixmier's concept of a {\em continuous bundle} \cite{Dix}, Rieffel  showed that \cite{Rie89,Rie94} 
\begin{enumerate}
\item The fibers $A_0= C_0(\bR^{2n})$ and  $A_{\hbar}= B_0(L^2(\bR^n))$, $h\in(0,1]$, can be combined into a (locally non-trivial) continuous bundle $A$ of $C^*$-algebras over $I=[0,1]$;
\item  $\tilde{A}_0=C^{\infty}_c(\bR^{2n})$ is a  dense Poisson subalgebra of $A_0$.
 \item  Each quantization map  $Q_{\hbar}:\tilde{A}_0\to A_{\hbar}$ is linear, and if we also define $Q_0: \tilde{A}_0 \hookrightarrow A_0$ as the inclusion  map, then the ensuing family $Q = (Q_\hbar)_{\hbar \in I}$
 satisfies:
  \begin{enumerate}[label=(\arabic*)]
\item  Each map $Q_{\hbar}$ is self-adjoint, i.e.\ $Q_{\hbar}(\overline{f})=Q_{\hbar}(f)^*$ (where $f^*(x)=\overline{f(x)}$). 
\item  For each $f\in \tilde{A}_0$ the following cross-section of the bundle is continuous:
\begin{align}
    &0\to f;\\
    &\hbar\to Q_{\hbar}(f) \ \ \  (\hbar\in I\backslash\{0\})).
\end{align}
\item  Each pair $f,g\in\tilde{A}_0$ satisfies the {\bf Dirac-Groenewold-Rieffel condition}:
\begin{align}
    \lim_{\hbar\to 0}\left|\left|\frac{i}{\hbar}[Q_{\hbar}(f),Q_{\hbar}(g)]-Q_{\hbar}(\{f,g\})\right|\right|_{\hbar}=0.\label{Diracgroenewold}
\end{align} 
\end{enumerate}
\end{enumerate}
These obervations led to the general concept of a {\em strict deformation of a Poisson manifold} $X$ \cite{Rie89,Lan98}, which we here state in the case of interest to us in which $X$ is compact or in which $X$ is a manifold with stratified boundary \cite{LMV,Pflaum}. In that case the space $I$ in which $\hbar$ takes values cannot be all of $[0,1]$, but should be a subspace $I\subset [0,1]$ thereof that at least contains 0 as an accumulation point. The reason for this is to ensure that the continuity properties of the bundle (viz. Remark \ref{continuitybundle} below) are met. This is assumed in what follows. Furthermore, the  Poisson bracket on $X$ is denoted, as usual, by $\{\cdot, \cdot\}: C^\infty(X)\times C^\infty(X) \to C^{\infty}(X)$ (where the smooth space $C^\infty(X)$ is suitably defined when $X$ is a more complicated object than a compact smooth manifold \cite{LMV}). 
\begin{definition}\label{def:deformationq}
Let $I\subset\mathbb{R}$ containing $0$ as accumulation point.
A {\bf strict deformation quantization} of a compact Poisson manifold $X$ on $I$ consists of
 \begin{itemize}
\item  A continuous
 bundle of unital $C^*$-algebras $(A_{\hbar})_{\hbar\in I}$ over $I$  with $A_0=C(X)$;
 \item A dense Poisson subalgebra  $\tilde{A}_0 \subseteq C^\infty(X) \subset A_0$ (on which $\{\cdot, \cdot\}$ is defined);
 \item  A  family $Q = (Q_\hbar)_{\hbar \in I}$ of  linear maps $Q_{\hbar}:\tilde{A}_0\to A_{\hbar}$ indexed by $\hbar\in I$ (called  {\bf quantization maps})  such that $Q_0$ is the inclusion map $\tilde{A}_0 \hookrightarrow A_0$, and the next
  conditions (a) - (c) hold, as well as  $Q_{\hbar}(\mathrm{1}_X)=\mathrm{1}_{A_{\hbar}}$ (the unit of $A_{\hbar}$)
 \begin{enumerate}
\item  Each map $Q_{\hbar}$ is self-adjoint, i.e.\ $Q_{\hbar}(\overline{f})=Q_{\hbar}(f)^*$ (where $f^*(x)=\overline{f(x)}$). 
\item  For each $f\in \tilde{A}_0$ the following cross-section of the bundle is continuous:
\begin{align}
    &0\to f;\\
    &\hbar\to Q_{\hbar}(f) \ \ \  (\hbar\in I\backslash\{0\})).
\end{align}
\item  Each pair $f,g\in\tilde{A}_0$ satisfies the {\bf Dirac-Groenewold-Rieffel condition}:
\begin{align}
    \lim_{\hbar\to 0}\left|\left|\frac{i}{\hbar}[Q_{\hbar}(f),Q_{\hbar}(g)]-Q_{\hbar}(\{f,g\})\right|\right|_{\hbar}=0.\label{Diracgroenewold}
\end{align} 
\end{enumerate}
\end{itemize}
\end{definition}
\begin{remark}\label{continuitybundle}
It follows from the definition of a continuous bundle of $C^*$-algebras that continuity properties like
\begin{align}
\lim_{\hbar\to 0}\|Q_{\hbar}(f)\|_\hbar=\|f\|_{\infty},
\end{align}
 and 
\begin{align}
\lim_{\hbar\to 0}\|Q_{\hbar}(f)Q_{\hbar}(g)-Q_{\hbar}(fg)\|_\hbar=0,
\end{align}
hold automatically.
\end{remark}
{\em Mean-field quantum spin systems} (see e.g. \cite{ABN,CCIL,IL,VGRL18,vandeVen} and references therein) fit into this framework. There,  the index set $I$ is given by ($0\notin\bN^*:=\{1,2,3,\ldots\}$)
\begin{equation}
I= \{1/N \:|\: N \in  \mathbb{N}^*\} \cup\{0 \}\equiv (1/\bN^*) \cup\{0 \}, \label{defI}\
\end{equation}
with the  topology inherited from $[0,1]$.  That is, we  put $\hbar=1/N$, where $N\in\mathbb{N}^*$ is interpreted as  the number of sites of the model; our interest is the limit $N\to\infty$. 
In this setting the above definition particularly applies when $X=S^2\subset\mathbb{R}^3$, i.e. the classical phase space for quantum mean-field spin systems. Indeed, the following shows the existence of a strict deformation quantization of $S^2$ with Poisson bracket given on $C^{\infty}(S^2)$ by
\begin{align}
\{f,g\}(x)=\sum_{a,b,c=1}^3\varepsilon_{ab}^cx_c\frac{\partial f(x)}{\partial{x_a}}\frac{\partial g(x)}{\partial{x_b}},
\end{align}
where $\varepsilon_{ab}^c$ is the Levi-Civita symbol.
Let us indicate the algebra of bounded operators on $\text{Sym}^N(\mathbb{C}^2)$ by $B(\text{Sym}^N(\mathbb{C}^2))$. Here $\text{Sym}^N(\mathbb{C}^2)\subset\bigotimes_{n=1}^N\mathbb{C}^2$ is the symmetric subspace of dimension $N+1$, given by the {\em symmetric} $N$-fold tensor product of $\mathbb{C}^2$ with itself. The reason we recall the space $\text{Sym}^N(\mathbb{C}^2)$ rather than the isomorphic space $\mathbb{C}^{N+1}$ is because of our application to mean field quantum spin chains, whose Hamiltonians originally defined on the $N$-fold tensor-product of $\mathbb{C}^2$ with itself, typically leave the symmetric subspace  $\text{Sym}^N(\mathbb{C}^2)$ invariant. More presisely, $\text{Sym}^N(\mathbb{C}^2)$ is given by the range of the symmetrizer operator $P_N$, defined by linear extension of the following map $P_N$ on elementary tensors $(v_1\otimes\cdot\cdot\cdot\otimes v_N)$:
\begin{align}
P_N (v_1 \otimes \cdots \otimes v_N) = \frac{1}{N!} \sum_{\sigma \in {\cal P}(N)} v_{\sigma(1)} \otimes \cdots \otimes v_{\sigma(N)}. \label{defPN}
\end{align}
It is known \cite[Theorem 8.1]{Lan17} that
 \begin{align}
 A_0&=C(S^2); \label{B02}\\
 A_{1/N}&=M_{N+1}(\mathbb{C})\simeq B(\text{Sym}^N(\mathbb{C}^2))\label{BN2},
   \end{align}
are the fibers of a continuous bundle of $C^*$-algebras over base space  $I$ defined by
\begin{align}
I=\{1/N\ |N\in\mathbb{N}^*\}\cup\{0\} = (1/\mathbb{N}^*)\cup \{0\}.
\end{align}
The continuous cross-sections are given by all sequences $(a_{1/N})_{N\in\mathbb{N}}\in\Pi_{n\in\mathbb{N}}A_{1/N}$ for which $a_0\in C(S^2)$ and $a_{1/N}\in M_{N+1}(\mathbb{C})$  and  such that the
sequence  $(a_{1/N})_{N\in\mathbb{N}}$ is asymptotically equivalent to $(Q_{1/N}(a_0))_{N\in\mathbb{N}}$, in the sense that
\begin{align}
\lim_{N\to\infty}||a_{1/N}-Q_{1/N}(a_0)||_N=0.\label{equivalencebookklaas}
\end{align}
Here, the symbol $Q_{1/N}$ denotes the quantization maps
\begin{align}
Q_{1/N}:\tilde{A}_0 \to A_{1/N},
\end{align}
where $\tilde{A}_0\subset C^{\infty}(S^2)\subset A_0$ is the dense Poisson subalgebra made of polynomials in three real variables restricted to the sphere $S^2$. The maps $Q_{1/N}$ are defined by the integral computed in weak sense
\begin{align}
Q_{1/N}(p)& :=
 \frac{N+1}{4\pi}\int_{S^2}p(\Omega)|v^{(\Omega)}\rangle\langle v^{(\Omega)}|_Nd\Omega\: \label{defquan3},
\end{align}
where $p$ denotes an arbitrary polynomial restricted to  $S^2$, $d\Omega$ indicates the unique $SO(3)$-invariant Haar measure on ${S}^2$ with $\int_{{S}^2} d\Omega = 4\pi$, and $|v^{(\Omega)}\rangle\langle v^{(\Omega)}|_N$ which are elements in $B(\text{Sym}^N(\mathbb{C}^2))$ denote the orthogonal projectors associated to the $N$ coherent spin states (see Appendix \ref{appB} for a general construction). In particular, if $1$ is the constant function $1(\Omega)=1_N, \ (\Omega\in S^2)$, and $1_N$ is the identity on $A_{1/N}=B(\text{Sym}^N(\mathbb{C}^2))$, the previous definition implies 
\begin{align}
Q_{1/N}(1) = 1_N  \label{defquan4}\:.
\end{align}
Indeed, it can be shown that the quantization maps \eqref{defquan3} satisfy the axioms of Definition \ref{def:deformationq}, which yields the existence of a strict deformation quantization of $S^2$. We remark that several equivalent definitions of these quantization maps are used in literature, e.g. \cite{Lan17,Pe72}. Moreover, since $S^2$ is a special case of a regular integral coadjoint orbit in the dual of the Lie algebra associated to $SU(2)$ which can be identified with $\mathbb{R}^3$, this theory can be generalized to arbitrary compact connected Lie groups \cite{Lan98}.  These quantization maps,  constructed from a family coherent states, also define a so-called {\bf pure state Berezin quantization} \cite{Lan98} for which \eqref{Berprop}, viz.
\begin{align}
f(\Omega)=\lim_{N\to\infty}\frac{N+1}{4\pi}\int_{S^2}d\Omega' f(\Omega')|\langle v^{(\Omega')},v^{(\Omega)}\rangle_N|^2, \ \ (\Omega\in S^2) \label{Berprop}
\end{align}
typically holds as well as surjectivity and positivity, in that $Q_{1/N}(f)\geq 0$ if $f\geq 0$ almost everywhere on $S^2$. 
Moreover, for all $N\in\mathbb{N}$ one has
\begin{align}
1=\frac{N+1}{4\pi}\int_{S^2}d\Omega'|\langle v^{(\Omega')},v^{(\Omega)}\rangle_N|^2. \label{identityunit}
\end{align}
\subsection{Mean-field theories and symbol}\label{Mean field theories and symbol}
In this section we consider {\em homogenous mean-field} theories, which are defined by a
single-site Hilbert space $\mathcal{H}_x = \mathcal{H} = \mathbb{C}^{n}$ and local Hamiltonians of the type
\begin{align}
H_{\Lambda}=|\Lambda|\tilde{h}(T_0^{(\Lambda)},T_1^{(\Lambda)},\cdot\cdot\cdot, T_{n^2-1}^{(\Lambda)}),\label{meanfield}
\end{align}
where $\tilde{h}$ is a polynomial on $M_n(\mathbb{C})$, and $\Lambda$ denotes a finite lattice on which $H_{\Lambda}$ is defined \cite{Lan17}. Here $T_0= 1_{M_n(\mathbb{C})}$, and the matrices $(T_i)_{i=1}^{n^2-1}$ in $M_n(\mathbb{C})$ form a basis of the real vector space of traceless self-adjoint $n\times n$ matrices; the latter may be identified with $i$ times the Lie algebra $\mathfrak{su(n)}$ of $SU(n)$, so that $(T_0,T_1,...,T_{n^2-1})$ is a basis of $i$ times the Lie algebra $\mathfrak{u(n)}$ of the unitary group $U(n)$ on $\mathbb{C}^n$. In those terms, we define
\begin{align}
T_i^{(\Lambda)}=\frac{1}{|\Lambda|}\sum_{x\in\Lambda}T_i(x).
\end{align}

Let us now introduce the notion of a {\em classical symbol}, i.e. a function
\begin{align}
h_N:=\sum_{k=0}^{M}N^{-k}h_k+O(N^{\infty}), \ \ \label{classicalsymbol1}
\end{align}
for some $M\in\mathbb{N}$ and where each $h_k$ is a smooth real-valued function on the manifold $M$ one considers. The first term $h_0$ is called the {\em principal symbol}. In the case for mean-field quantum theories we will see below that the principal symbol exactly plays the role of the polynomial $\tilde{h}$ defined above.  We remark that a general mean-field model is defined on a lattice with $N$ sites. However, as every spin interacts with every other spin the geometric configuration including the dimension of the lattice is irrelevant \cite{LMV,VGRL18}. We therefore restrict ourselves to mean field quantum spin chains, i.e. we consider tensor products of $M_2(\mathbb{C})$. Moreover, as already mentioned their Hamiltonians share the property that they leave the symmetric subspace $\text{Sym}^N(\mathbb{C}^2)\subset\bigotimes_{n=1}^N\mathbb{C}^2$ invariant. In what follows we consider mean-field quantum spin systems whose Hamiltonians $H_{1/N}$ are restricted to this subspace, since quantum spin systems arising in that way are tipically of the form $Q_{1/N}(h_N)$ for some ($N$-dependent) real polynomial function $h_N$ on $M=S^2$ given by \eqref{classicalsymbol1} and $Q_{ 1/N}$ given by \eqref{defquan3}  (see e.g. \cite[Theorem 3.1]{MV} for an example). In a different setting this connection was already noticed by Ettore Majorana, proposing that pure (vector) states induced by a sequence of eigenvectors of an $N$-qubit system which are permutation-invariant correspond to pure spin-$J=N/2$ quantum states which in turn are represented by $N$ points on the Bloch sphere $S^2$ \cite{Mar}. Such spin systems are widely studied in condensed matter physics, but also in mathematical physics they form an important field of research. One tries to calculate the corresponding partition function, or quantities like the free energy or the entropy of the system in question and considers their thermodynamic limit as the number of sites $N$ increas to infinity. 
\\\\
{\em Example 1.}
Consider the quantum Curie-Weiss Hamiltonian on a chain scaled by a factor $1/(N+2)$:
\begin{align}
H^{CW}_{1/N}: &  \underbrace{\mathbb{C}^2 \otimes \cdots  \otimes\mathbb{C}^2}_{N \: times}  \to 
\underbrace{\mathbb{C}^2 \otimes \cdots  \otimes\mathbb{C}^2}_{N \: times};\nonumber \\
H^{CW}_{1/N} &=\frac{1}{N+2} \left(-\frac{J}{2N} \sum_{i,j=1}^N \sigma_3(i)\sigma_3(j) -B \sum_{j=1}^N \sigma_1(j)\right).\label{CWhamscaled}
\end{align}
Here $\sigma_k(j)$ stands for $I_2 \otimes \cdots \otimes \sigma_k\otimes \cdots \otimes I_2$, where $\sigma_k$ occupies  the $j$-th slot, and 
 $J,B \in \mathbb{R}$ are given constants defining the strength of the spin-spin coupling and the (transverse) external magnetic field, respectively.
Regarding \eqref{meanfield} is it not difficult to see that
\begin{align}
\tilde{h}^{CW}(T_1,T_2,T_3)=-2(JT_3^2+BT_1), \label{tildepol}
\end{align}
where $T_i=\frac{1}{2}\sigma_i$. In order to determine the symbol we show that the Hamiltonian is a quantization of the {\em classical Curie-Weiss Hamiltonian}, i.e. a polynomial $h_0^{CW}$ on $S^2$ given (in spherical coordinates) by
\begin{align}
 h_0^{CW}(\theta,\phi)=-(\frac{J}{2}\cos{(\theta)}^2+B\sin{(\theta)}\cos{(\phi)}).
\end{align}
Hereto we recall a result originally obtained by Lieb \cite{Lieb}, namely that under the maps $Q_{1/N}$ given by \eqref{defquan3} one has a correspondence between functions $G$ (also called upper symbol) on the sphere $S^2$ and operators $A_G$ on $\mathbb{C}^{N+1}$ such that they satisfy the relation $A_G=Q_{1/N}(G)$. For some spin operators, the functions $G$ are determined (see Table $1$ below).
\begin{table}[ht]
\begin{center}
\begin{tabular}
{ |p{3cm}||p{9cm}|p{3cm}|p{3cm}|}
 \hline
 $\text{Spin Operator}$ & $G(\theta,\phi)$ \\
 \hline
 $S_z$   &  $\frac{1}{2}(N+2)\cos{(\theta)}$\\
 $S_z^2$ &  $\frac{1}{4}(N+2)(N+3)\cos{(\theta)}^2-\frac{1}{4}(N+2)$\\
 $S_x$ & $\frac{1}{2}(N+2)\sin{(\theta)}\cos{(\phi)}$ \\
 $S_x^2$ & $\frac{1}{4}(N+2)(N+3)\cos{(\phi)}\sin {(\theta)}-\frac{1}{4}(N+2)$\\
 $S_y$ & $\frac{1}{2}(N+2)\sin{(\theta)}\sin{(\phi)}$ \\
 $S_y^2$ &  $\frac{1}{4}(N+2)(N+3)\sin{(\phi)}\sin{(\theta)}-\frac{1}{4}(N+2)$ \\
 \hline
\end{tabular}
\caption{Spin operators on $\text{Sym}^N(\mathbb{C}^2)\simeq\mathbb{C}^{N+1}$ and their corresponding upper symbols $G$.}
\end{center}
\end{table}
Here, $S_{z}$ is the total spin operator in the $z$-direction: $S_z=\frac{1}{2}\sum_k\sigma_z(k)$, $S_x$ is the total spin operator in the $x$-direction, and $S_y$ is the total spin operator in the $y$-direction. Using these results, a straightforward computation shows that
\begin{align}
H^{CW}_{1/N}|_{\text{Sym}^N(\mathbb{C}^2)} = Q_{1/N}(h_0^{CW})-\frac{3J}{N}Q_{1/N}(z^2)+ \frac{1}{N}Q_{1/N}(1).
\end{align}
We write $h_N:=h_0^{CW}+N^{-1}(-3Jz^2+1)$, so that by linearity of $Q_{1/N}$ one has $Q_{1/N}(h_N)=H_{1/N}^{CW}|_{\text{Sym}^N(\mathbb{C}^2)}$. The function $h_N:S^2\to\mathbb{R}$ is the classical symbol associated to the quantum spin operator $H_{1/N}^{CW}|_{\text{Sym}^N(\mathbb{C}^2)} $ with principal symbol $h_0^{CW}$ which indeed has the same form as $\tilde{h}^{CW}$ in \eqref{tildepol}.
\\\\
{\em Example 2.}
Let us consider the Lipkin-Meshkov-Glick (LMG) model. This model was first proposed to describe phase transitions in atomic nuclei \cite{Lip,Dus}, and more recently it was found that the LMG model is relevant to many other quantum systems, such as cavity QED \cite{Mor}. The (scaled) Hamiltonian of a general  LMG model is given by
\begin{align}
    H_{1/N}^{LMG}=-\frac{\lambda}{N(N+2)}(S_{x}^2+\gamma S_{y}^2)-\frac{B}{N+2}S_z,\label{eq:LMGHamiltonian}
\end{align}
where as before $S_x=\frac{1}{2}\sum_{k}\sigma_x(k)$ is the total spin operator in direction $x$ and so on. We are interested in $\lambda > 0$, standing for a ferromagnetic interaction, $\gamma\in (0, 1]$ describing the anisotropic in-plane coupling, and $B$ is the magnetic field along $z$ direction with $B \geq 0$. By a similar computation as before, we find
\begin{align}
H_{1/N}^{LMG}|_{\text{Sym}^N(\mathbb{C}^2)}=Q_{1/N}(h_0^{LMG}) + \frac{1}{N}Q_{1/N}(1) - \frac{3}{2N}Q_{1/N}(x^2+\gamma y^2),
\end{align}
where $h_0^{LMG}:=-\frac{1}{4}(x^2+\gamma y^2)-Bz$ denotes the principal symbol of $h_N:=h_0^{LMG}+N^{-1}(-\frac{3}{2}(x^2+\gamma y^2)+1)$.
\subsection{Classical limit}\label{claslim}
A continuous bundle of $C^*$- algebras provides a natural setting to describe models in quantum statistical mechanics \cite{Lan17}. By interpreting the semi-classical parameter $\hbar$ as the number of particles of a system, namely  $\hbar=1/N\in 1/\mathbb{N}^*\cup\{0\}$, the limit $N\to\infty$ provides the so-called {\em thermodynamic limit}, namely the density of the system $N/V$ is kept fixed, and the volume $V$ of the system sent to infinity, as well. The limiting system constructed at the limit $N=\infty$ is typically quantum statistical mechanics in infinite volume. In this setting the so-called
{\em quasi-local observables} are studied: these give rise to a non-commutative continuous bundles of $C^*$-algebras, denoted by $A^{(q)}$, which is defined over the base space $I= 1/\mathbb{N}^*\cup \{0\}$ with fibers at $1/N$ given by a $N$-fold tensor product of a matrix algebra with itself. However, the limit $N\to\infty$ can also provide the relation between classical (spin) theories viewed as limits of quantum statistical mechanics. In this case the quasi-symmetric (or macroscopic) observables are studied and these induce a commutative bundle of $C^*$- algebras $A^{(q)}$ which is defined over the same base space $I$ with exactly the same fibers at $1/N$ as the algebra $A^{(q)}$, but differ at $N=\infty$, i.e., at $1/N=0$. It is the bundle $A^{(c)}$ that relates these systems to strict deformation quantization, since macroscopic observables are precisely defined by (quasi-) symmetric sequences which form the continuous cross sections of a continuous bundle of $C^*$-algebras \cite{LMV,MV,Lan17}. In \cite[Theorem 2.3]{MV} it has been shown that the quantization maps \eqref{defquan3} indeed define quasi-symmetric sequences, and hence macroscopic observables. In what follows, therefore, we study the limit $N\to\infty$ with respect to these observables.

We then consider a sequence of unit eigenvectors $\psi_N\in\text{Sym}^N(\mathbb{C}^2)\subset\bigotimes_{n=1}^N\mathbb{C}^2$ of a mean-field quantum spin Hamiltonian $H_N|_{\text{Sym}^N(\mathbb{C}^2)}=Q_{1/N}(h_N)$ with $h_N$ some classical symbol as in \eqref{classicalsymbol1} for which each $h_k$ is a real polynomial $S^2\to\mathbb{R}$. These vectors induce a sequence of pure (permutation-invariant vector) states $(\omega_N)_N$, viz.  
$\omega_N(\cdot)=\langle\psi_N,(\cdot)\psi_N\rangle$. 
 In section \ref{classical limit of mean-field theories} we prove that under some conditions the sequence $(\omega_N)_N$ admits a {\bf classical limit}, in that
\begin{align}
\omega_0(f)=\lim_{N\to\infty}\omega_N(Q_{1/N}(f)), \label{classical limit}
\end{align}
exists for all polynomials $f\in\tilde{A}_0$ and and defines a state on $\tilde{A}_0$. 
\begin{remark}\label{extension}
We remark that the same limit (viz. equation \eqref{classical limit}) exists when we replace $\tilde{A}_0$ by $A_0$. To see this we stress that the quantization maps $Q_{1/N}$ defined by \eqref{defquan3} extend to all of $A_0$ by continuous linear extension, except for the Dirac-Groenewold-Rieffel condition. This follows from the fact that the base space $I$ is discrete outside zero and the maps $Q_{1/N}$ are uniformly bounded in $N$, in the sense that $||Q_{1/N}(f)||_N\leq ||f||_{\infty}$ for any $N\in\mathbb{N}$ \cite{Lan98}.  As a result the continuity properties in Remark \ref{continuitybundle} hold for all $f\in A_0$. 
\end{remark}
The state $\omega_0$ may be regarded as the classical limit of the family $\omega_N$. Using Riesz representation theorem \eqref{classical limit} means that
for all $f\in \tilde{A}_0$ one has
\begin{align}
 \mu_0(f)=\lim_{N\to\infty}\int_{S^2}d\mu_{\psi_N}(\Omega)f(\Omega),  \label{classicallimit3}
\end{align}
where $\mu_0$ is the probability measure corresponding to the state $\omega_0$ (i.e. $\omega_0(f)=\int_{S^2}d\mu_0f$) and $\mu_{\psi_N}$ is a probability measure on $S^2$ with density $B_{\psi_N}(\Omega):=|\langle v^{(\Omega)},\psi_N\rangle|^2$ called the {\bf Husimi density function} associated to the unit vector $\psi_N$. In other words 
$\mu_{\psi_N}$ is given by
\begin{align}
d\mu_{\psi_N}(\Omega)=d\Omega\frac{N+1}{4\pi}|\langle v^{(\Omega)},\psi_N\rangle|^2.
\end{align}
\begin{remark}\label{parameter}
We stress that the parameter $N$ at the same time denotes the number of spin $\frac{1}{2}$-particles described by the restricted mean-field Hamiltonian $H_N|_{\text{Sym}^N(\mathbb{C}^2)}$ as well as the total angular momentum $J=N/2$ of a single spin particle, already denoted in \cite{Mar}. Comparing this with the result by Lieb \cite{Lieb}, the 'thermodynamical' limit $N\to\infty$ and the 'classical' limit $J\to\infty$ are therefore taken at the same time. Note that this is based on the fact that the single-site algebra is fixed, i.e. we consider $M_2(\mathbb{C})$. If one instead considers a spin system on the $N$-fold tensor product of the the algebra $M_{2J+1}(\mathbb{C})$ with itself one can try to perform two limits $J,N\to\infty$, using the fact that  $M_{2J+1}(\mathbb{C})=Q_{1/J}(f)$ by surjectvity of \eqref{defquan3}. This goes beyond the scope of this paper.
\end{remark}
The plan of the paper is as follows. In section $2$ we state and prove our main results (Theorem \ref{convergence} with Corollary \ref{spectrumcor},  and Theorem \ref{THMMAIN}). Theorem \ref{convergence} and Corollary  \ref{spectrumcor} establish a connection between the spectrum of the quantum mean field spin in question sytems and the principal symbol on the classical Bloch sphere $S^2$. Theorem \ref{THMMAIN} instead shows the existence of the classical limit of a certain sequence of eigenvectors corresponding to a mean-field quantum spin system. In section \ref{SSB1} we give an application regarding spontaneous symmetry breaking, and in the appendix we provide a comprehensive overview of useful definitions and mathematical tools.
\section{Semiclassical properties of mean-field quantum spin systems}
\subsection{Asymptotic properties of the spectum}
In this subsection we assume that we are given a strict deformation quantization of a compact symplectic manifold in the sense of Berezin, in that the quantization map is defined in terms of a family of coherent states such that in particular properties \eqref{Berprop} and \eqref{identityunit} hold. Note that this clearly applies to $S^2$ where the quantization is defined by maps \eqref{defquan3}  as we have seen in the previous section. 
We now prove a result relating the spectrum of such quantization maps to the range of the function that is quantized.
\begin{theorem}\label{convergence}
Given a pure state Berezin strict deformation quantization of a compact symplectic manifold $X$ on base space $I$. Denote  by $Q_{1/N}$ the associated quantization maps. Then for any $f\in \tilde{A}_0=\text{Dom}(Q_{1/N})$ 
\begin{align}
\lim_{N\to\infty}\text{dist}\bigg{(}\text{ran}(f),\sigma(Q_{1/N}(f))\bigg{)}=0,
\end{align}
where $\sigma(Q_{1/N}(f))$ denotes the spectrum of the operator $Q_{1/N}(f)$, and $\text{dist}$ the distance function. In general the distance between a bounded set $X\subset\mathbb{C}$ and a nonempty set $Y\subset\mathbb{C}$ is
defined by
\begin{align}
\text{dist}(X,Y)=\sup_{x\in X}\inf_{y\in Y}|x-y|.
\end{align}
\end{theorem}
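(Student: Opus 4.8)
The plan is to show that $\text{ran}(f)$ and $\sigma(Q_{1/N}(f))$ each lie within a vanishing neighbourhood of the other; since $X$ is compact and $f$ continuous, $\text{ran}(f)$ is a compact subset of $\mathbb{R}$ (recall $Q_{1/N}$ is self-adjoint, so $f$ real-valued forces $\sigma(Q_{1/N}(f))\subset\mathbb{R}$ as well, and the general $f$ case reduces to the self-adjoint one by considering real and imaginary parts, or directly). Because $\text{dist}(\cdot,\cdot)$ as defined is the (asymmetric) Hausdorff-type excess $\sup_{x\in\text{ran}(f)}\inf_{y\in\sigma(Q_{1/N}(f))}|x-y|$, I only need the one-sided statement: every value $f(\Omega_0)$ is approximated, as $N\to\infty$, by some point of the spectrum of $Q_{1/N}(f)$.

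The key step is a \emph{coherent-state trial-vector} argument. Fix $\Omega_0\in X$ and consider the unit coherent vector $v^{(\Omega_0)}_N\in H_N$. I would estimate
\begin{align}
\big\| Q_{1/N}(f)\, v^{(\Omega_0)}_N - f(\Omega_0)\, v^{(\Omega_0)}_N \big\|_N^2
= \langle v^{(\Omega_0)}, Q_{1/N}(f)^2 v^{(\Omega_0)}\rangle_N - 2 f(\Omega_0)\,\langle v^{(\Omega_0)}, Q_{1/N}(f) v^{(\Omega_0)}\rangle_N + f(\Omega_0)^2 .
\end{align}
The middle term converges to $f(\Omega_0)^2$ by the reproducing property \eqref{Berprop} (applied with the roles of the two points exchanged, using that $B_{v^{(\Omega_0)}}$ concentrates at $\Omega_0$): indeed $\langle v^{(\Omega_0)}, Q_{1/N}(f) v^{(\Omega_0)}\rangle_N = \frac{N+1}{4\pi}\int_{S^2} f(\Omega')\,|\langle v^{(\Omega')},v^{(\Omega_0)}\rangle_N|^2 d\Omega' \to f(\Omega_0)$. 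For the first term I would write $Q_{1/N}(f)^2$ using \eqref{identityunit} (partition of unity) and the continuity property $\|Q_{1/N}(f)Q_{1/N}(f) - Q_{1/N}(f^2)\|_N\to 0$ from Remark \ref{continuitybundle}; then $\langle v^{(\Omega_0)}, Q_{1/N}(f^2) v^{(\Omega_0)}\rangle_N \to f^2(\Omega_0)$ again by \eqref{Berprop}. Hence the whole expression tends to $0$, i.e. $v^{(\Omega_0)}_N$ is an approximate eigenvector of the self-adjoint operator $Q_{1/N}(f)$ with approximate eigenvalue $f(\Omega_0)$; by the spectral theorem (for a self-adjoint operator, $\|(A-\lambda)u\|\ge \text{dist}(\lambda,\sigma(A))$ for unit $u$) this gives $\text{dist}(f(\Omega_0),\sigma(Q_{1/N}(f)))\le \|Q_{1/N}(f)v^{(\Omega_0)}_N - f(\Omega_0)v^{(\Omega_0)}_N\|_N \to 0$.

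Finally I would upgrade the pointwise statement to the uniform one demanded by the $\sup$ over $\text{ran}(f)$. Here I use that $\text{ran}(f)$ is compact: cover it by finitely many balls of radius $\varepsilon$ centred at values $f(\Omega_1),\dots,f(\Omega_m)$, choose $N$ large enough that each of the finitely many quantities $\|Q_{1/N}(f)v^{(\Omega_j)}_N - f(\Omega_j)v^{(\Omega_j)}_N\|_N<\varepsilon$, and conclude $\sup_{x\in\text{ran}(f)}\text{dist}(x,\sigma(Q_{1/N}(f)))<2\varepsilon$. The main obstacle I anticipate is making the convergence in the approximate-eigenvector estimate \emph{locally uniform} in $\Omega_0$ — but for the $\sup$ over the compact set $\text{ran}(f)$ one does not actually need local uniformity in $\Omega_0$, only finitely many base points, so compactness of $X$ (equivalently of $\text{ran}(f)$) does all the work; the genuinely quantitative input is the concentration of the Husimi/Berezin measure $|\langle v^{(\Omega')},v^{(\Omega_0)}\rangle_N|^2\,\frac{N+1}{4\pi}d\Omega'$ at $\Omega_0$, which is exactly what \eqref{Berprop}–\eqref{identityunit} encode, and for the explicit $S^2$ coherent states this decays like a power of the overlap $\frac{1+\cos\gamma}{2}$ between the two points and is classical.
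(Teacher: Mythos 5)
Your proposal is correct and rests on exactly the same ingredients as the paper's own proof: the coherent-state reproducing property \eqref{Berprop} together with the bundle continuity $\|Q_{1/N}(f)^2-Q_{1/N}(f^2)\|\to 0$ to show that $v^{(\Omega_0)}$ is an approximate eigenvector of $Q_{1/N}(f)$ with approximate eigenvalue $f(\Omega_0)$, combined with the standard bound relating approximate eigenvectors to the distance to the spectrum — the paper merely packages this last step as blow-up of the resolvent norm inside a proof by contradiction (extracting a convergent subsequence in $\mathrm{ran}(f)$ to handle the supremum), whereas you argue directly and take care of the supremum with a finite $\varepsilon$-net over the compact range. One remark: both your inequality $\|(A-\lambda)u\|\ge\mathrm{dist}(\lambda,\sigma(A))$ and the paper's $\|R_r(A)\|\le\mathrm{dist}(r,\sigma(A))^{-1}$ require $A$ to be normal, so both arguments really treat real-valued $f$; your flag that complex $f$ needs a separate reduction points to a gap the paper's proof shares rather than one you introduced.
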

\begin{proof}
Let us assume by  contradiction that the statement in the theorem is not true. Then  there exists $\delta>0$, a function $f\in\tilde{A}_0$, and a sequence $(\lambda_{N_k})_k$ in $\text{ran}(f)$ such that $\text{dist}(\lambda_{N_k},\sigma(Q_{1/N_k}(f)))\geq\delta >0$ for all $k$. Since $X$ is compact and $f$ is continuous also $\text{ran}(f)$ is compact so that we can extract a subsequence $\lambda_{N_k'}$ converging to a point $r\in\text{ran}(f)$. 
Hence, for all $\epsilon>0$ there exists a $K_{\epsilon}$ such that $|r-\lambda_{N_k'}|<\epsilon$ for all $k'\geq K_{\epsilon}$. This implies that $r\notin\sigma(Q_{1/N_{k'}}(f))$ for $k'\geq K_{\epsilon}$, which means that the resolvent operator associated to $r$ and denoted by $R_{r}(Q_{1/N_{k'}}(f))$ exists for all $k'\geq K_{\epsilon}$.
\newline
\newline
Now, we can find an element $\Omega\in X$ such that $f(\Omega)=r$. 
By property \eqref{Berprop}, we can always recover $f(\Omega)$ as
\begin{align}
f(\Omega)=\lim_{N\to\infty}\langle v^{(\Omega)},Q_{1/N}(f) v^{(\Omega)}\rangle,\label{coherentstaterepresentation}
\end{align}
where $|v^{(\Omega)}\rangle$ denotes the coherent state induced by the point $\Omega\in X$ on which $f$ is defined. For this $\Omega$ and $k'\geq K_{\epsilon}$  let us now estimate
\begin{align}
&1=|\langle v^{(\Omega)},R_r(Q_{1/{N_k'}}(f))Q_{1/{N_k'}}(f-r)v^{(\Omega)}\rangle|^2\leq\nonumber\\&||R_r(Q_{1/{N_k'}}(f))||^2\cdot||Q_{1/{N_k'}}(f-r)v^{(\Omega)}||^2,\label{identity}
\end{align}
using in the first equality the fact that $Q_{1/{N_k'}}(f-r)=Q_{1/{N_k'}}(f)-Q_{1/{N_k'}}(1_X)r=Q_{1/{N_k'}}(f)-r$. We now make the following estimation on
\begin{align}
&||Q_{1/{N_k'}}(f-r)v^{(\Omega)}||^2=\langle v^{(\Omega)},(Q_{1/{N_k'}}(f-r))^*Q_{1/{N_k'}}(f-r)v^{(\Omega)}\rangle\nonumber=&\\&
\langle v^{(\Omega)},Q_{1/{N_k'}}(f-r)^*Q_{1/{N_k'}}(f-r)v^{(\Omega)}\rangle, \label{id4}
\end{align}
where we used that the quantization maps preserve the adjoint. Then,
\begin{align}
&\bigg{|}\langle v^{(\Omega)},Q_{1/{N_k'}}(f-r)^*Q_{1/{N_k'}}(f-r)v^{(\Omega)}\rangle-\langle v^{(\Omega)},Q_{1/{N_k'}}|f-r|^2v^{(\Omega)}\rangle\bigg{|}\nonumber\leq&\\&
\bigg{|}\bigg{|}Q_{1/{N_k'}}(f-r)^*Q_{1/{N_k'}}(f-r)- Q_{1/{N_k'}}(|f-r|^2)\bigg{|}\bigg{|},\label{Rieffellikecondition}
\end{align}
using the Cauchy-Schwarz inequality and the fact that $|v^{(\Omega)}\rangle$ are unit vectors. Since the quantization map is strict, we  conclude that the above inequality uniformly converges to zero as $k'\to\infty$ (see Remark \ref{continuitybundle}).
This together with \eqref{identity} and \eqref{id4} implies that
\begin{align*}
&1\leq||R_r(Q_{1/{N_k'}}(f))||^2\bigg{(} \langle v^{(\Omega)},Q_{1/{N_k'}}(f-r)^*Q_{1/{N_k'}}(f-r)|v^{(\Omega)}\rangle - \langle v^{(\Omega)},Q_{1/{N_k'}}(|f-r|^2) v^{(\Omega)}\rangle  \bigg{)} \\&+||R_r(Q_{1/{N_k'}}(f))||^2\langle v^{(\Omega)},Q_{1/{N_k'}}(|f-r|^2) v^{(\Omega)}\rangle\leq&\\&
||R_r(Q_{1/{N_k'}}(f))||^2\cdot\bigg{|}\bigg{|}Q_{1/{N_k'}}(f-r)^*Q_{1/{N_k'}}(f-r) - Q_{1/{N_k'}}(|f-r|^2)\bigg{|}\bigg{|}\\&+
||R_r(Q_{1/{N_k'}}(f))||^2\langle v^{(\Omega)},Q_{1/{N_k'}}(|f-r|^2) v^{(\Omega)}\rangle.
\end{align*}
By \eqref{coherentstaterepresentation} it follows that $\lim_{k'\to\infty}\langle v^{(\Omega)},Q_{1/{N_k'}}(|f-r|^2) v^{(\Omega)}\rangle=|f(\Omega)-r|^2=0$.  Since also $\lim_{k'\to\infty}||Q_{1/{N_k'}}(f-r)^*Q_{1/{N_k'}}(f-r)  - Q_{1/{N_k'}}(|f-r|^2)||=0$, it must follow that $||R_r(Q_{1/{N_k'}}(f))||^2\to\infty$ as $k'\to\infty$, which also implies that $||R_r(Q_{1/{N_k'}}(f))||\to\infty$ as $k'\to\infty$. In order to conclude we recall that that
\begin{align*}
||R_r(Q_{1/{N_k'}}(f))||_{N_k'}\leq\frac{1}{\text{dist}(r,\sigma(Q_{1/{N_k'}}(f))}.
\end{align*}
Combining the above inequalities yields for $k'$ large enough the final inequality
\begin{align}
0<\delta\leq\text{dist}(\lambda_{N_k'},\sigma(Q_{1/{N_k'}}(f)))\leq\frac{1}{||R_r(Q_{1/{N_k'}}(f))||_{N_k'}}.
\end{align}
By taking the limit $k'\to\infty$ of the above inequality we clearly arrive at  a contradiction, since the right-hand side converges to zero. 
This proves the theorem.
\end{proof}
By remark \ref{extension} this result extends to all of $A_0$. It yields an important corollary, relating the spectrum of a mean field quantum spin system with Hamiltonian $H_{1/N}$ to the range of a polynomial on $S^2$. As already mentioned in the introduction, the operator $H_{1/N}$ leaves $\text{Sym}^N(\mathbb{C}^2)$ invariant, and we therefore consider its restriction $H_{1/N}|_{\text{Sym}^N(\mathbb{C}^2)}$ which typically assumes the form $Q_{1/N}(h_N)$, where $Q_{1/N}$ is defined by \eqref{defquan3}, $h_N$ is a classical symbol of the type $h=h_0+\sum_{k=1}^MN^{-k}h_k$, with each $h_i \ (i=0,...,M)$ a real polynomial on $S^2$. With a slight abuse of notation, unless specified otherwise, let us now write $H_{1/N}:=H_{1/N}|_{\text{Sym}^N(\mathbb{C}^2)}$. Using this notation we prove the following result.
\begin{corollary}\label{spectrumcor}
Given a mean-field quantum spin system with Hamiltonian $H_{1/N}=Q_{1/N}(h_N)$, where $h_N$ is a classical symbol of the type $h=h_0+\sum_{k=1}^MN^{-k}h_k$, and each $h_i \ (i=0,...,M)$ is a real polynomial on $S^2$. Then, the spectrum of $H_{1/N}$ is related to the range of the principal symbol $h_0$ in the following way,
\begin{align}
\lim_{N\to\infty}\text{dist}(\text{ran}(h_0),\sigma(H_{1/N}))=0.
\end{align}
\end{corollary}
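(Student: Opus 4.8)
The plan is to deduce Corollary \ref{spectrumcor} from Theorem \ref{convergence} by writing $H_{1/N}=Q_{1/N}(h_0)+R_N$, where $R_N:=Q_{1/N}(h_N-h_0)=\sum_{k=1}^M N^{-k}Q_{1/N}(h_k)$ is a ``small'' self-adjoint perturbation, and then controlling the effect of the perturbation on the spectrum. First I would apply Theorem \ref{convergence} with $f=h_0$ (using Remark \ref{extension} to ensure it applies, though here $h_0$ is already a polynomial in $\tilde A_0$), which gives $\lim_{N\to\infty}\text{dist}(\text{ran}(h_0),\sigma(Q_{1/N}(h_0)))=0$. So it remains to compare $\sigma(H_{1/N})$ with $\sigma(Q_{1/N}(h_0))$.

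The key estimate is that the operator norm of the perturbation vanishes: since each $Q_{1/N}$ is uniformly bounded with $\|Q_{1/N}(h_k)\|_N\le\|h_k\|_\infty$ (Remark \ref{extension}), we get $\|R_N\|_N\le\sum_{k=1}^M N^{-k}\|h_k\|_\infty\le C/N\to 0$. Then I would invoke the standard perturbation fact for self-adjoint operators: if $A,B$ are bounded self-adjoint with $\|A-B\|\le\eta$, then the Hausdorff distance between $\sigma(A)$ and $\sigma(B)$ is at most $\eta$; equivalently $\sigma(B)\subseteq\{\lambda:\text{dist}(\lambda,\sigma(A))\le\eta\}$ and vice versa. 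Applying this with $A=Q_{1/N}(h_0)$, $B=H_{1/N}$, $\eta=\|R_N\|_N$, one controls $\text{dist}(\sigma(Q_{1/N}(h_0)),\sigma(H_{1/N}))$ — but note one must be slightly careful, because the quantity in the statement, $\text{dist}(\text{ran}(h_0),\sigma(H_{1/N}))=\sup_{x\in\text{ran}(h_0)}\inf_{y\in\sigma(H_{1/N})}|x-y|$, is the one-sided (non-symmetric) distance, so I only need the inclusion $\sigma(H_{1/N})\supseteq$ (a neighborhood shrinking to) points approximating $\text{ran}(h_0)$; more precisely, for $x\in\text{ran}(h_0)$, pick $y_N\in\sigma(Q_{1/N}(h_0))$ with $|x-y_N|\le\text{dist}(\text{ran}(h_0),\sigma(Q_{1/N}(h_0)))+1/N=:\varepsilon_N$, then pick $z_N\in\sigma(H_{1/N})$ with $|y_N-z_N|\le\|R_N\|_N$, so $\inf_{y\in\sigma(H_{1/N})}|x-y|\le\varepsilon_N+\|R_N\|_N$, uniformly in $x$.

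Assembling these: $\text{dist}(\text{ran}(h_0),\sigma(H_{1/N}))\le\text{dist}(\text{ran}(h_0),\sigma(Q_{1/N}(h_0)))+\|R_N\|_N$, and both terms on the right tend to $0$ as $N\to\infty$ (the first by Theorem \ref{convergence}, the second by the uniform boundedness estimate above), which gives the claim.

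I expect the only genuine obstacle to be a cosmetic one: making sure the perturbation lemma is stated and used with the correct (one-sided) distance, since the paper's $\text{dist}(X,Y)$ is asymmetric, and verifying that the elementary self-adjoint perturbation bound $d_H(\sigma(A),\sigma(B))\le\|A-B\|$ is either cited or proved in a line (it follows quickly from the fact that for self-adjoint $T$, $\text{dist}(\lambda,\sigma(T))=\|R_\lambda(T)\|^{-1}$ together with a Neumann-series resolvent estimate — exactly the kind of bound already used in the proof of Theorem \ref{convergence}). Everything else is a short, routine chaining of the triangle inequality.
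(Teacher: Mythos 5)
Your proposal is correct and follows essentially the same route as the paper: both decompose $H_{1/N}=Q_{1/N}(h_0)+\sum_{k=1}^M N^{-k}Q_{1/N}(h_k)$, bound the perturbation in norm by $O(1/N)$ via $\|Q_{1/N}(h_k)\|\leq\|h_k\|_{\infty}$, control the resulting shift of the spectrum by a Weyl-type eigenvalue perturbation bound for hermitian matrices, and conclude with the triangle inequality against Theorem \ref{convergence}. Your extra care about the one-sided nature of $\text{dist}(X,Y)$ is a welcome refinement of a point the paper glosses over, but it does not change the argument.
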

\begin{proof}
This follows for example from Weyl's Theorem (see e.g. \cite{Weyl} for details) applied to the hermitian matrices $Q_{1/N}(h_0)$ and $Q_{1/N}(\sum_{k=1}^MN^{-k}(h_k))=\sum_{k=1}^MN^{-k}Q_{1/N}(h_k)$, stating that if $\lambda_N^{(i)}$ is the $i^{th}$ eigenvalue of $H_{1/N}=Q_{1/N}(h_0)+\sum_{k=1}^MN^{-k}Q_{1/N}(h_k)$, and $\epsilon_N^{(i)}$ is the $i^{th}$ eigenvalue of $Q_{1/N}(h_0)$, then
\begin{align*}
|\lambda_N^{(i)}-\epsilon_N^{(i)}|\leq ||Q_{1/N}(\sum_{k=1}^MN^{-k}(h_k))||\leq\frac{1}{N}\max_{1\leq k\leq M}||h_k||_{\infty}\to 0 \ \ (N\to\infty),
\end{align*}
where in the last step we used that $||Q_{1/N}(h_k)||\leq ||h_k||_{\infty}$. In particular, we conclude that $\lim_{N\to\infty}\text{dist}(\sigma(Q_{1/N}(h_0)),\sigma(H_{1/N}))=0$. By the triangle inequality applied to the distance function in the previous theorem, the result follows.
\end{proof}

\subsection{Classical limit of mean-field theories}\label{classical limit of mean-field theories}
In this subsection we prove the existence of the classical limit. Since this paper focusses on mean-field quantum spin systems we state the results for the case where the manifold equals $S^2$. We remark that the main results can be generalized for arbitrary compact quantizable Kahler manifolds \cite{Porte}. We start with a lemma.
\begin{lemma}
Given a strict deformation quantization of $S^2$ with associated quantization maps $Q_{1/N}(h_0)$ defined by \eqref{defquan3}, where $h_0$ is a real polynomial function on $S^2$. Suppose that $E\in\text{ran}(h_0)$ such that $E=h_0(\Omega)$ for some finitely many distinct $\Omega\in S^2$, then for these $\Omega$ one has
\begin{align}
\lim_{N\to\infty}||Q_{1/N}(h_0)v^{(\Omega)}-Ev^{(\Omega)}||=0.
\end{align}
\end{lemma}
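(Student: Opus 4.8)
The plan is to reduce the claimed norm convergence to the scalar convergence already supplied by the Berezin property \eqref{Berprop}, exploiting the positivity of the relevant quantization map. Write $g := |h_0 - E|^2 = (h_0 - E)^*(h_0 - E)$, a nonnegative real polynomial on $S^2$ that vanishes precisely at the finitely many points $\Omega_1,\dots,\Omega_n$ with $h_0(\Omega_i)=E$. For a fixed such $\Omega = \Omega_i$, expand
\begin{align}
\|Q_{1/N}(h_0)v^{(\Omega)} - E v^{(\Omega)}\|^2 = \langle v^{(\Omega)}, Q_{1/N}(h_0-E)^* Q_{1/N}(h_0-E) v^{(\Omega)}\rangle,
\end{align}
using that $Q_{1/N}(h_0-E) = Q_{1/N}(h_0) - E\,1_N$ by linearity and $Q_{1/N}(1_X)=1_N$, together with self-adjointness $Q_{1/N}(\overline{h_0-E}) = Q_{1/N}(h_0-E)^*$.

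The first key step is to replace $Q_{1/N}(h_0-E)^*Q_{1/N}(h_0-E)$ by $Q_{1/N}(g)$ at an asymptotically vanishing cost. Precisely, by the triangle inequality and Cauchy–Schwarz (with $\|v^{(\Omega)}\|=1$),
\begin{align}
\Big| \langle v^{(\Omega)}, Q_{1/N}(h_0-E)^*Q_{1/N}(h_0-E) v^{(\Omega)}\rangle - \langle v^{(\Omega)}, Q_{1/N}(g) v^{(\Omega)}\rangle \Big| \leq \big\| Q_{1/N}(h_0-E)^* Q_{1/N}(h_0-E) - Q_{1/N}(g) \big\|_N,
\end{align}
and the right-hand side tends to $0$ as $N\to\infty$ by the second continuity property in Remark \ref{continuitybundle} (namely $\|Q_{1/N}(f)Q_{1/N}(g) - Q_{1/N}(fg)\|_N \to 0$), which holds because the quantization is strict. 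This is exactly the maneuver already performed in the proof of Theorem \ref{convergence} around \eqref{Rieffellikecondition}.

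The second key step is to observe that $\langle v^{(\Omega)}, Q_{1/N}(g) v^{(\Omega)}\rangle \to g(\Omega) = 0$ as $N\to\infty$, which is precisely the Berezin reproducing property \eqref{coherentstaterepresentation}/\eqref{Berprop} applied to the polynomial $g\in\tilde A_0$ at the point $\Omega$, together with $g(\Omega_i) = |h_0(\Omega_i)-E|^2 = 0$. Combining the two steps, $\|Q_{1/N}(h_0)v^{(\Omega)} - E v^{(\Omega)}\|^2 \to 0$, hence the norm itself tends to $0$, which is the assertion. Since there are only finitely many such $\Omega$, no uniformity issue arises.

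I expect the only genuinely delicate point to be the first step: strictly speaking one must verify that the strictness estimate of Remark \ref{continuitybundle} applies to $f=h_0-E$ and $g = \overline{h_0-E}$ — i.e.\ that these lie in the domain $\tilde A_0$ of the quantization maps — which is immediate since $\tilde A_0$ is the algebra of polynomials on $S^2$ and is closed under products and conjugation, and $E$ is real so $h_0-E$ is again a real polynomial. Everything else is a direct transcription of the estimates already carried out in Theorem \ref{convergence}, now evaluated in the single coherent state $v^{(\Omega)}$ rather than used to drive a resolvent to infinity.
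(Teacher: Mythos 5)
Your proof is correct, and it takes a genuinely different route from the paper's. The paper proves the lemma by a direct integral estimate: it writes $Q_{1/N}(h_0)v^{(\Omega_i)}-Ev^{(\Omega_i)}$ as $\frac{N+1}{4\pi}\int_{S^2}d\Omega'\,(h_0(\Omega')-h_0(\Omega_i))\langle v^{(\Omega')},v^{(\Omega_i)}\rangle\,v^{(\Omega')}$, splits the integral into a small ball $B_\delta(\Omega_i)$ (controlled by uniform continuity of $h_0$) and its complement (controlled by the explicit exponential decay $|\langle v^{(\Omega)},v^{(\Omega')}\rangle|=((1+t)/2)^{N/2}$ of the coherent-state overlap on $S^2$). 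You instead square the norm, use $Q_{1/N}(h_0)-E=Q_{1/N}(h_0-E)$ and self-adjointness to reduce to $\langle v^{(\Omega)},Q_{1/N}(h_0-E)^*Q_{1/N}(h_0-E)v^{(\Omega)}\rangle$, trade the operator product for $Q_{1/N}(|h_0-E|^2)$ at a cost that vanishes by the strictness property of Remark \ref{continuitybundle}, and finish with the Berezin reproducing property \eqref{Berprop} at the point $\Omega$, where $|h_0(\Omega)-E|^2=0$. This is exactly the maneuver of \eqref{identity}--\eqref{Rieffellikecondition} in the proof of Theorem \ref{convergence}, redeployed for a single coherent state, and all the objects involved ($h_0-E$ and its square) do lie in the polynomial algebra $\tilde A_0$, as you note. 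What your argument buys: it is shorter, it uses only the axiomatic data of a pure-state Berezin strict deformation quantization (so it works verbatim on any compact symplectic manifold satisfying \eqref{Berprop} and \eqref{identityunit}, with no appeal to the explicit $S^2$ overlap formula), and it does not actually need the hypothesis that $h_0^{-1}(E)$ is finite — the conclusion is pointwise at any $\Omega$ with $h_0(\Omega)=E$. What the paper's hands-on estimate buys in exchange is quantitative information (an exponentially small off-diagonal contribution) that the abstract strictness property does not deliver, though none of that extra precision is needed for the statement of the lemma.
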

\begin{proof}
Consider those $\Omega\in S^2$ with $h_0(\Omega)=E$. Since there are finitely many of such $\Omega$, say $n$, we can label them $\Omega_1,...,\Omega_n$. By uniform continuity, given $\epsilon>0$, we can find $\delta>0$ such that for all $\Omega'\in B_{\delta}(\Omega_i)=\{\Omega'\in S^2 | \ d_{S^2}(\Omega',\Omega_i)<\delta \}$ one has $|h_0(\Omega')-h_0(\Omega_i)|<\epsilon/2$ for all $i=1,...,n$. Since the  $\Omega_i$ are distinct we can shrink each $B_{\delta}(\Omega_i)$ to make them pairwise disjoint, as is always possible in a Hausdorff space. For any such  $\Omega_i$ we compute
\begin{align*}
&\bigg{|}\bigg{|}Q_{1/N}(h_0)v^{(\Omega_i)}-E v^{(\Omega_i)}\bigg{|}\bigg{|} = \bigg{|}\bigg{|}\frac{N+1}{4\pi}\int_{S^2}d\Omega'(h_0(\Omega')-h_0(\Omega_i))\langle v^{(\Omega')},v^{(\Omega_i)}\rangle v^{(\Omega')} \bigg{|}\bigg{|}
\\&\leq\bigg{|}\bigg{|}\frac{N+1}{4\pi}\int_{B_{\delta}(\Omega_i)} d\Omega'(h_0(\Omega')-h_0(\Omega_i))\langle v^{(\Omega')},v^{(\Omega_i)}\rangle v^{(\Omega')}\bigg{|}\bigg{|} \\& + \bigg{|}\bigg{|}\frac{N+1}{4\pi}\int_{S^2\setminus B_{\delta}(\Omega_i)}d\Omega'(h_0(\Omega')-h_0(\Omega_i))\langle v^{(\Omega')},v^{(\Omega_i)}\rangle v^{(\Omega')}\bigg{|}\bigg{|}\\&\leq \sup_{\Omega'\in B_{\delta}(\Omega_i)}|h_0(\Omega')-h_0(\Omega_i)|\frac{N+1}{4\pi}\int_{B_{\delta}(\Omega_i)} d\Omega'|\langle v^{(\Omega')},v^{(\Omega_i)}\rangle| \\& + \frac{N+1}{4\pi}\int_{S^2\setminus B_{\delta}(\Omega_i)}d\Omega'|h_0(\Omega')-h_0(\Omega_i)||\langle v^{(\Omega')},v^{(\Omega_i)}\rangle|.
\end{align*}
The first term is bounded by $\epsilon/2$ using uniform continuity of $h_0$, and the fact that $1=\frac{N+1}{4\pi}\int_{S^2}d\Omega| v^{(\Omega)}\rangle\langle v^{(\Omega)}|$. For the second addendend, we note that the overlap between two coherent states is given by
\begin{align}
|\langle v^{(\Omega)},v^{(\Omega')}\rangle|=\bigg{(}\frac{1+t}{2}\bigg{)}^{N/2},
\end{align}
where $t\in [-1, 1)$ denotes the cosine of the angle between the both (different) coherent states. For $\Omega'\in S^2\setminus U_{\delta}$, we have $\frac{1+t}{2}\leq M_{\delta}<1$ for some constant $M_{\delta}$ depending on $\delta$. For $N$ sufficiently large, one can estimate
\begin{align}
\bigg{(}\frac{1+t}{2}\bigg{)}^{N/2}\leq M_{\delta}^{\frac{N}{2}}< \epsilon/2C.
\end{align}
where $C$ is the integration constant given by $C:=\frac{N+1}{4\pi}\int_{S^2\setminus B_{\delta}(\Omega_i)}d\Omega'|h_0(\Omega')-h_0(\Omega)|$. Hence, we conclude that for all $i=1,...,n$ one has
$\lim_{N\to\infty}||Q_{1/N}(h_0)v^{(\Omega_i)} - Ev^{(\Omega_i)}||=0$.
\end{proof}
\begin{remark}\label{spectrumeigenvalues}
In particular, if $E\in\text{ran}(h_0)$, then there is a sequence $\lambda_N$ of eigenvalues of $Q_{1/N}(h_0)$ such that $|E-\lambda_N|\to 0$ as $N\to\infty$. The same holds when we replace the principal symbol $h_0$ by $h_N$. Indeed, since such $h_N$ is typically given by $h=h_0+\sum_{k\geq 1}^MN^{-k}h_k$ for some $M\in\mathbb{N}$, by linearity $Q_{1/N}(h_N)=Q_{1/N}(h_0)+\sum_{k\geq 1}^MN^{-k}Q_{1/N}(h_k)$. Since each $||Q_{1/N}(h_k)||\leq ||h_k||_{\infty}$, clearly one has $||\sum_{k\geq 1}^MN^{-k}Q_{1/N}(h_k)||\leq\frac{1}{N}\max_{1\leq k\leq M}||h_k||_{\infty}$, and hence $||Q_{1/N}(h_N)v^{(\Omega)}-Ev^{(\Omega)}||\leq||Q_{1/N}(h_0)v^{(\Omega)}-Ev^{(\Omega)}||+O(1/N)$, so that by the lemma also $\lim_{N\to\infty}||Q_{1/N}(h_N)v^{(\Omega)}-Ev^{(\Omega)}||=0$.
\end{remark}
Before proving our main result regarding the existence of the classical limit of a mean field quantum spin system we assume the following data of the corresponding Hamiltonian. 
\begin{assumption}\label{symmetry}
Given a mean-field quantum spin system with Hamiltonian $H_{1/N}:=H_{1/N}|_{\text{Sym}^N(\mathbb{C}^2)}=Q_{1/N}(h_N)$ and polynomial symbol $h_N$ as defined before. We consider critical non-degenerate points $\Omega\in S^2$ of the principal symbol $h_0$ of $h_N$. By the Morse Lemma and the fact that $S^2$ is compact it follows that there are finitely many of such points $\Omega_i \ (i=1,...,d)$ which all are isolated, and hence distinct. We then fix an energy $E$ in the range of $h_0$ with the property that  $A(E):=h_0^{-1}(E)\cap \{\Omega_1,...,\Omega_d\}\neq \emptyset$, where $h_0^{-1}(E)$ is the preimage of $E$. Let us denote by $n$ the number of points in the set $A(E)$.
We assume that there exists a topological group $G$ of isometries of $S^2$ acting continuously on $S^2$ and $\Omega_1\in A(E)$ such that the corresponding orbit $\mathcal{O}(\Omega_1)$ equals $\{\Omega_1,...,\Omega_n\}$, where $\Omega_i \in A(E)$ for each $i=1,...,n$. This action obviously induces an action $T$ on the space of functions $\Psi$ on $C(S^2)$, namely
\begin{align}
(T_g\Psi)(\Omega):=\Psi(g^{-1}\Omega) \ \ (g\in G, \ \  \Omega\in S^2),
\end{align}
where the function $\Omega\mapsto \Psi(\Omega)$ is defined by $\Psi(\Omega):=\langle\Omega,\psi_N\rangle$ (see Appendix \ref{appA} for details). We furthermore assume that this action is such that for a particular sequence of normalized eigenvectors $(\psi_N)_N$ of $Q_{1/N}(h_N)$ one has 
\begin{align}
\Psi(g^{-1}\Omega)=\Psi(\Omega) \ \ (g\in G, \ \  \Omega\in S^2).
\end{align}
\end{assumption}
\begin{remark}
In Assumption \ref{symmetry} we considered non-degenerate critical points which correspond to some energy $E$. An interesting example is the case when $E$ is the minimum of $h_0$. If the Hessian of $h_0$ is nonsingular  at points in which $h_0$ attains the minimum value $E$ (i.e. the points are non-degenerate), one can try to prove that the sequence of ground state eigenvectors admits a classical limit, as we will see in the case of the Curie-Weiss model.  This is generally true for mean-field quantum spin systems as long as the Hesssian of the principal symbol of the model at all points of minima is nonsingular and the model meets the symmetry condition as explained above. If one instead takes a fixed energy $E$ with $\min(h_0)<E<\max(h_0)$ and assumes that $E$ is a regular energy (meaning that the gradient is non-zero at all points $\Omega$ in $S^2$ with $h_0(\Omega)=E$), then preimage is a curve with no singular points, and also now can try to prove the existence of the classical limit as in Theorem \ref{THMMAIN}.
This is a challenging problem and will not be further investigated in this paper. 
\end{remark}
\noindent
{\em Example 1 revisited.}\label{symmetrycw}\\
Let us consider the Curie-Weiss model of Example 1. We consider $E=\min(h_0)$. It is not difficult to see that $h_0^{-1}(E)=\{\Omega_-,\Omega_+\}$, where $\Omega_{\pm}$ are the two (distinct) minima of $h_0$. Indeed, the Hessian is nonsingular at  these points. The symmetry group is $\mathbb{Z}_2$ which consists of two elements, and the non-trivial element acts by reflexion. The orbit corresponding to any such minimal point generates both minima, and the (unique, up to phase) ground state eigenvector $\psi_N$ of $H_{1/N}^{CW}$ is $\mathbb{Z}_2$- invariant for each $N\in\mathbb{N}$ (see Corollary \ref{classicallimitcor} and \cite{VGRL18} for a detailed discussion). 
\\\\
Under Assumption \ref{symmetry} there exists a group $G$, $n$ distinct  points $\Omega_i\in S^2 \ (i=1,...,n)$ with $h_0(\Omega_i)=E$ such that $\mathcal{O}(\Omega_1)=\{\Omega_1,...,.\Omega_n\}$ (for some particular point $\Omega_1$), and we can find a sequence of $G$-invariant eigenstates $\psi\equiv(\psi_N)_N$ (with corresponding eigenvalues $\lambda_N$) of the operator $H_{1/N}$.  As a result of Remark \ref{spectrumeigenvalues} one clearly has $\lim_{N\to\infty}|\lambda_N - E|=0$. For these eigenstates we now prove the following result.
\begin{theorem}\label{THMMAIN}
Consider the sequence of algbraic states $(\omega_N)_N$ induced by the eigenstates $\psi\equiv(\psi_N)_N$ of the mean field operator $H_{1/N}=Q_{1/N}(h_N)$. Let $E\in\text{ran}(h_0)$ such that $|\lambda_N-E|\to 0$ as $N\to\infty$, where  $(\lambda_N)_N$ is the sequence of eigenvalues corresponding to $\psi_N$. Then,  the following limit exists and defines a state on $A_0$
\begin{align}
\omega_0(f)=\lim_{N\to\infty}\omega_N(Q_{1/N}(f)) \ \ (f\in {A}_0), \label{limit2}
\end{align}
where $\omega_0(f)=\frac{1}{n}\sum_{i=1}^nf(\Omega_i)$ and $\Omega_{i}$ are the points in $S^2$ such that $h_0(\Omega_i)=E$.
\end{theorem}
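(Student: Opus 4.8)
The plan is to transfer everything to the Husimi probability measures $\mu_{\psi_N}$ on $S^2$. By the definition \eqref{defquan3} of $Q_{1/N}$ one has, for every $f\in C(S^2)=A_0$,
\[
\omega_N(Q_{1/N}(f))=\langle\psi_N,Q_{1/N}(f)\psi_N\rangle=\frac{N+1}{4\pi}\int_{S^2}f(\Omega)\,|\langle v^{(\Omega)},\psi_N\rangle|^2\,d\Omega=\int_{S^2}f\,d\mu_{\psi_N},
\]
which is the identity underlying \eqref{classicallimit3}. Since each $\mu_{\psi_N}$ is a Borel probability measure on the compact metric space $S^2$, the sequence $(\mu_{\psi_N})_N$ is relatively compact in the weak-$*$ topology, and it suffices to prove that every weak-$*$ limit point equals $\mu_0:=\frac1n\sum_{i=1}^n\delta_{\Omega_i}$; then $\mu_{\psi_N}\to\mu_0$, hence $\omega_N(Q_{1/N}(f))\to\int f\,d\mu_0=\frac1n\sum_{i=1}^n f(\Omega_i)$ for all $f\in A_0$, and $\omega_0(f):=\frac1n\sum_{i=1}^n f(\Omega_i)$ is visibly a state on $A_0$. (That the hypothesis $|\lambda_N-E|\to 0$ can be met by a $G$-invariant sequence of eigenvectors is provided, in the cases of interest, by the Lemma above and Remark \ref{spectrumeigenvalues}.)

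\emph{Step 1: concentration on the level set $h_0^{-1}(E)$.} I would first show $\int_{S^2}(h_0-E)^2\,d\mu_{\psi_N}\to 0$ via the first two moments of $h_0$. Writing $h_N=h_0+\sum_{k=1}^M N^{-k}h_k$ and using $\|Q_{1/N}(h_k)\|_N\le\|h_k\|_\infty$ gives $\|Q_{1/N}(h_N)-Q_{1/N}(h_0)\|_N=O(1/N)$, so $Q_{1/N}(h_0)\psi_N=\lambda_N\psi_N+r_N$ with $\|r_N\|=O(1/N)$. Hence $\int h_0\,d\mu_{\psi_N}=\langle\psi_N,Q_{1/N}(h_0)\psi_N\rangle=\lambda_N+O(1/N)\to E$, and, using also the asymptotic multiplicativity $\|Q_{1/N}(h_0^2)-Q_{1/N}(h_0)^2\|_N\to 0$ of Remark \ref{continuitybundle} together with self-adjointness of $Q_{1/N}(h_0)$,
\begin{align*}
\int_{S^2}h_0^2\,d\mu_{\psi_N}&=\langle\psi_N,Q_{1/N}(h_0^2)\psi_N\rangle\\
&=\|Q_{1/N}(h_0)\psi_N\|^2+o(1)=\lambda_N^2+o(1)\to E^2.
\end{align*}
Therefore $\int(h_0-E)^2\,d\mu_{\psi_N}\to 0$, so any weak-$*$ limit point $\mu$ of $(\mu_{\psi_N})_N$ satisfies $\int(h_0-E)^2\,d\mu=0$, i.e.\ $\operatorname{supp}\mu\subseteq h_0^{-1}(E)$. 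Under the Morse hypotheses of Assumption \ref{symmetry}, in the situation considered (for instance $E=\min h_0$ with non-degenerate minima, as in the Curie--Weiss model) the level set $h_0^{-1}(E)$ is the finite set $\{\Omega_1,\dots,\Omega_n\}$, so $\mu=\sum_{i=1}^n c_i\delta_{\Omega_i}$ with $c_i\ge 0$ and $\sum_i c_i=1$.

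\emph{Step 2: symmetry fixes the weights.} By Assumption \ref{symmetry} the chosen eigenvectors satisfy $\Psi_N(g^{-1}\Omega)=\Psi_N(\Omega)$ for all $g\in G$, where $\Psi_N(\Omega)=\langle v^{(\Omega)},\psi_N\rangle$ (Appendix \ref{appA}); hence the Husimi density $B_{\psi_N}(\Omega)=|\Psi_N(\Omega)|^2$ is $G$-invariant, and since $G$ acts on $S^2$ by isometries it preserves $d\Omega$, so each $\mu_{\psi_N}$ is $G$-invariant. Passing to the weak-$*$ limit (using continuity of the $G$-action) shows $\mu$ is $G$-invariant. As $G$ acts transitively on the orbit $\mathcal{O}(\Omega_1)=\{\Omega_1,\dots,\Omega_n\}$, this forces $c_1=\dots=c_n=1/n$, so $\mu=\mu_0$. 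Since the limit point was arbitrary, \eqref{limit2} follows.

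\emph{On the main obstacle.} The delicate point is the identification $\operatorname{supp}\mu=\{\Omega_1,\dots,\Omega_n\}$: the moment computation only yields $\operatorname{supp}\mu\subseteq h_0^{-1}(E)$, which is the desired finite set precisely when the level set consists of non-degenerate critical points (e.g.\ $E=\min h_0$). For a general critical value --- a saddle level, where $h_0^{-1}(E)$ is a singular curve --- one must in addition use the Dirac--Groenewold--Rieffel condition: from $\langle\psi_N,[Q_{1/N}(h_N),Q_{1/N}(f)]\psi_N\rangle=0$ for all $f\in\tilde{A}_0$ and $\tfrac{i}{\hbar}[Q_{1/N}(h_N),Q_{1/N}(f)]=Q_{1/N}(\{h_0,f\})+o(1)$ one obtains $\int_{S^2}\{h_0,f\}\,d\mu=0$ for every polynomial $f$, i.e.\ $\mu$ is invariant under the Hamiltonian flow of $h_0$; a finite flow-invariant measure carried by a level set through non-degenerate critical points is necessarily supported on those critical points, which closes the gap. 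Making this rigorous, and treating regular energies $E$ (where the level set is a union of circles and the invariant measures are far from atomic), is precisely the challenge flagged after Assumption \ref{symmetry} and is not pursued here. A second, routine, ingredient is the covariance $v^{(g\Omega)}\propto U_g v^{(\Omega)}$ of the spin coherent states, needed so that $G$-invariance of $\psi_N$ transfers to that of $B_{\psi_N}$; this is standard and recalled in Appendix \ref{appA}.
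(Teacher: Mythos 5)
Your proof is correct under the same reading of Assumption \ref{symmetry} that the paper itself relies on (namely that $h_0^{-1}(E)$ coincides with the finite orbit $\{\Omega_1,\dots,\Omega_n\}$ of non-degenerate critical points, as happens for $E=\min h_0$ in the Curie--Weiss model), but it reaches the conclusion by a genuinely different route. The paper localizes directly: it picks disjoint neighborhoods $U_i$ of the $\Omega_i$, uses the $G$-invariance of $\Psi_N$ to equate the masses $\int_{U_i}|\Psi_N|^2$, controls the contribution of $S^2\setminus\bigcup_i U_i$ by Deleporte's decay estimate for eigenfunctions of analytic Toeplitz operators (Theorem \ref{Portethm1}, adapted to symbols $h_N=h_0+O(1/N)$ so as to give an $O(1/N)$ bound on the mass in the forbidden region), and finishes with uniform continuity of $f$. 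You instead work with the Husimi measures $\mu_{\psi_N}$: weak-$*$ compactness reduces the problem to identifying limit points, the variance computation $\int(h_0-E)^2\,d\mu_{\psi_N}\to 0$ --- which needs only $\|Q_{1/N}(h_k)\|\le\|h_k\|_\infty$, the asymptotic multiplicativity of Remark \ref{continuitybundle}, and self-adjointness --- forces any limit point to be supported on $h_0^{-1}(E)$, and $G$-invariance of the limit fixes the weights at $1/n$. Your argument is more elementary, using only the strict-deformation-quantization axioms and bypassing the semiclassical machinery of \cite{Porte} entirely, at the price of yielding no rate of convergence, whereas the paper's localization gives quantitative control. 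Both arguments hinge on the same point, which you correctly isolate: the identification of $h_0^{-1}(E)$ with the finite set $\{\Omega_1,\dots,\Omega_n\}$ (the paper needs it so that $\mathbb{C}\setminus\bigcup_i U_i$ is at positive distance from $\{h_0=E\}$; you need it to pass from $\operatorname{supp}\mu\subseteq h_0^{-1}(E)$ to atomicity). Your closing suggestion that flow-invariance of $\mu$, extracted from the Dirac--Groenewold--Rieffel condition, could handle more general critical levels goes beyond what the paper does and addresses exactly the case it defers in the remark following Assumption \ref{symmetry}.
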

\begin{proof}
Clearly the symmetric combination $\frac{1}{n}\sum_{i=1}^nf(\Omega_i)$ where the $\Omega_{i}$ are such that $h_0(\Omega_i)=E$ in the above theorem defines a (mixed) state and hence a probability measure on $S^2$. By uniform continuity of $f$, given $\epsilon>0$, we can find $\delta>0$ such that for all $\Omega\in B_{\delta}(\Omega_i)=\{\Omega\in S^2 | \ d_{S^2}(\Omega,\Omega_i)<\delta \}$ one has $|f(\Omega)-f(\Omega_i)|<\epsilon/2n$, for all $i=1,...,n$. Let us fix one such neighborhood $U_1:=B_{\delta}(\Omega_1)$. By the assumption we can clearly find $g_i\in G$ ($i=1,...,n)$ so that the sets $U_i:=g_i^{-1}U_1$ are neighborhoods of $\Omega_i$, using the fact that  the map $\Omega\mapsto g\Omega$ is a homeomorphism. Since the points $\Omega_i$ are distinct the $n$ neighborhoods can be picked to be pairwise disjoint. 
It now follows that
\begin{align}
&\frac{N+1}{\pi}\int_{U_1}\frac{dz}{(1+|z|^2)^2}|\langle v^{(z)},\psi\rangle|^2=\frac{N+1}{4\pi}\int_{U_1}d\Omega|\Psi(\Omega)|^2=\nonumber\\&\frac{N+1}{4\pi}\int_{U_2}d\Omega|\Psi(\Omega)|^2=...=\frac{N+1}{4\pi}\int_{U_n}d\Omega|\Psi(\Omega)|^2\nonumber=\\&\frac{N+1}{\pi}\int_{U_n}\frac{dz}{(1+|z|^2)^2}|\langle v^{(z)},\psi\rangle|^2,\label{invariantnbhs}
\end{align}
where we used the property that for these $g_i\in G \ \ (i=1,...,n)$ and $\Omega\in U_1$, one has $\Psi(g_i^{-1}\Omega)=\Psi(\Omega)$ by assumption. 
This implies that $\Psi(U_1)=\Psi(g_i^{-1}U_1)=\Psi(U_i)$ for all $i=1,...n$, from which \eqref{invariantnbhs} follows.
Under stereographic projection the neighborhoods $U_i$ correspond to open sets in $\mathbb{C}$ which we will denoted by the same name, and the points  $S^2\ni \Omega_{i}$  will be denoted by $z_{i}\in\mathbb{C}$. Given now the sequence of normalized eigenstates $\psi\equiv(\psi_N)_N$ of $Q_{1/N}(h_N)$ inducing the pure states $(\omega_N)_N$, we compute $\omega_N(Q_{1/N}(f))$, i.e.
\begin{align}
&\langle\psi,Q_{1/N}(f)\psi\rangle_{\mathbb{C}^{N+1}}=\frac{N+1}{4\pi}\int_{S^2}d\Omega f(\Omega)|\langle\psi,v^{(\Omega)}\rangle|^2=\nonumber\\&\frac{N+1}{\pi}\int_{\mathbb{C}}\frac{dz^2}{(1+|z|^2)^{2}} f(z)|\Psi(z)|^2,
\end{align}
where as before $\Psi(z)=\langle v^{(z)}, \psi \rangle=(1+|z|^2)^{-N/2}p(z)$. We then split this integral as follows:
\begin{align}
\int_{\cup_{i}U_i}+ \int_{\mathbb{C}\setminus{\cup_i U_i}}\frac{N+1}{\pi}\frac{dz^2}{(1+|z|^2)^{2}} f(z)|\Psi(z)|^2.
\end{align}
Moreover, we use the coherent state property, i.e.
\begin{align}
1= \langle \psi,\psi\rangle = \frac{N+1}{\pi}\int_{\mathbb{C}}\frac{dz^2}{(1+|z|^2)^{2}} |\langle \psi,v^{(z)}\rangle|^2 .
\end{align}
We then write
\begin{align*}
&\lim_{N\to\infty}\bigg{|}\langle\psi,Q_{1/N}(f)\psi\rangle-\frac{1}{n}\sum_{i=1}^n f(\Omega_i)\bigg{|}\\=&\lim_{N\to\infty}
\bigg{|}\frac{N+1}{\pi}\int_{\mathbb{C}}\frac{dz^2}{(1+|z|^2)^{2}} f(z)|\langle v^{(z)},\psi\rangle|^2 - \frac{1}{n}\sum_{i=1}^n\frac{N+1}{\pi}\int_{\mathbb{C}}\frac{dz^2}{(1+|z|^2)^{2}} f(z_i)|\langle v^{(z)},\psi\rangle|^2\bigg{|}
\\=&
\lim_{N\to\infty}\bigg{|} \frac{N+1}{\pi}\int_{\mathbb{C}\setminus{\cup U_i}}\frac{dz^2}{(1+|z|^2)^{2}}\bigg{(}f(z)-\frac{1}{n}\sum_{i=1}^nf(z_i)\bigg{)}|\langle v^{(z)},\psi\rangle|^2\\&
+ \frac{N+1}{\pi}\int_{\cup U_i}\frac{dz^2}{(1+|z|^2)^{2}}\bigg{(}f(z)-\frac{1}{n}\sum_{i=1}^n f(z_i)\bigg{)}|\langle v^{(z)},\psi\rangle|^2
\bigg{|}
\end{align*}
We now estimate the first integral:
\begin{align}
&\bigg{|} \frac{N+1}{\pi}\int_{\mathbb{C}\setminus{\cup U_i}}\frac{dz^2}{(1+|z|^2)^{2}}\bigg{(}f(z)-\frac{1}{n}\sum_{i=1}^nf(z_i)\bigg{)}|\langle v^{(z)},\psi\rangle|^2\bigg{|}\nonumber\\&\leq \frac{(n+1)(N+1)}{\pi}||f||_{\infty}^2\int_{\mathbb{C}\setminus{\cup U_i}}\frac{dz^2}{(1+|z|^2)^{2}}|\Psi(z)|^2 \nonumber\\&=  (n+1)||f||_{\infty}^2||\psi||_{\mathbb{C}\setminus{\cup U_i}}^2.
\end{align} 
By the Theorem \ref{Portethm1} the norm $||\psi||_{\mathbb{C}\setminus{\cup U_i}}^2$  can be made bounded by $\epsilon/(2(n+1)||f||_{\infty}^2)$ as $N$ becomes sufficientely large, since outside the sets $U_{i}$ the distance to $\{h_0=E \}$ is clearly positive. This follows from the comments under Theorem \ref{Portethm1}, since for $N$ large enough we can estimate the norm $||\psi||_{\mathbb{C}\setminus{\cup U_i}}^2$ by $C/N$, for some constant $C$ that does not depend on $N$. 
We are thus done if we can show that 
\begin{align*}
&\bigg{|}\frac{N+1}{\pi}\int_{\cup U_i}\frac{dz^2}{(1+|z|^2)^{2}}f(z)|\langle v^{(z)},\psi\rangle|^2 -
\frac{1}{n}\sum_{i=1}^n\frac{N+1}{\pi}\int_{\cup U_i}\frac{dz^2}{(1+|z|^2)^{2}}f(z_i)|\langle v^{(z)},\psi\rangle|^2\bigg{|}< \epsilon/2,
\end{align*}
whenever $N$ sufficiently large.
Since $U_i\cap U_j=\emptyset$ if $i\neq j$, we can write
\begin{align*}
&\frac{N+1}{\pi}\int_{\cup U_i}\frac{dz^2}{(1+|z|^2)^{2}}f(z)|\langle v^{(z)},\psi\rangle|^2\\&=\sum_{i=1}^n\frac{N+1}{\pi}\int_{U_i}\frac{dz^2}{(1+|z|^2)^{2}}f(z)|\langle v^{(z)},\psi\rangle|^2.
\end{align*}
Making use of \eqref{invariantnbhs}  we can estimate
\begin{align}
&\bigg{|}\frac{N+1}{\pi}\int_{\cup U_i}\frac{dz^2}{(1+|z|^2)^{2}}f(z)|\langle v^{(z)},\psi\rangle|^2-\frac{1}{n}\sum_{i=1}^n
\frac{N+1}{\pi}\int_{\cup U_i}\frac{dz^2}{(1+|z|^2)^{2}}f(z_i)|\langle v^{(z)},\psi\rangle|^2\bigg{|}\nonumber
\\&\leq
\sum_{i=1}^n\bigg{|}\frac{N+1}{\pi}\int_{U_i}\frac{dz^2}{(1+|z|^2)^{2}}(f(z)-f(z_i))|\langle v^{(z)},\psi\rangle|^2\bigg{|}.\label{finalin}
\end{align}
By the fact that $\frac{N+1}{\pi}\int_{\mathbb{C}}\frac{dz^2}{(1+|z|^2)^{2}}|\Psi(z)|^2=1$,  we find that each of the $n$ terms is bounded by $\sup_{z\in U_i}|f(z)-f(z_i)|$. By uniform continuity of $f$, each supremum is bounded by  $\epsilon/2n$, so that \eqref{finalin} is bounded by $\epsilon/2$.  This shows that
\begin{align}
\bigg{|}\langle\psi,Q_{1/N}(f)\psi\rangle-\frac{1}{n}\sum_{i=1}^n f(\Omega_i)\bigg{|}< \epsilon,
\end{align}
whenever $N$ sufficiently large, which concludes the proof of the theorem. 
\end{proof}
\subsection{Spontaneous symmetry breaking}\label{SSB1}
Let us now say some words about spontaneous symmetry breaking (SSB). We refer to \cite{Lan17,LMV,VGRL18} for a more general discusision. For finite quantum systems the ground state of a generic Hamiltonian is typically unique and hence invariant under whatever symmetry group $G$ it may have. Hence SSB, in the sense of having a family of asymmetric ground states related by the action of $G$, seems possible only in infinite quantum systems or in classical systems (for both of which the arguments proving uniqueness). This is not what happens in nature, since real samples are finite and already do display SSB. In appendix \ref{appC} a possible mechanism is provided yielding some approximate form of symmetry breaking in large but finite systems, even though the theory seems to forbid this. We instead work with a definition of SSB that is standard in mathematical physics and applies equally to finite and infinite systems (provided these are described correctly), and to classical and quantum systems, namely that the ground state (as defined in Subsection \ref{claslim}) of a system with $G$-invariant dynamics (where $G$ is some group, typically a discrete group or a Lie group) is either pure but not $G$-invariant, or $G$-invariant but mixed. Accordingly, if SSB occurs this then implies that the exact pure ground state of a finite quantum system converges to a mixed state on the limit system. At least in the language of algebraic quantum theory this is the essence of SSB:
\\\\
{\em Pure ground states are not invariant, whilst invariant ground states are not pure.}

Under Assumption \ref{symmetry} and the assumptions of Theorem \ref{THMMAIN} we moreover assume that the sequence of eigenvectors $(\psi_N)_N$ corresponds to non-degenerate ground states (i.e. eigenvectors corresponding to the lowest eigenvalue such that the corresponding eigenspace is one-dimensional) of the mean field Hamiltonian $H_{1/N}|_{\text{Sym}^N(\mathbb{C}^2)}$. This just means that these eigenvectors $\psi_N\ (N\in\mathbb{N})$ are unique, up to a phase factor. One should notice that (ground state) eigenvectors of the compressed Hamiltonian $H_{1/N}|_{\text{Sym}^N(\mathbb{C}^2)}$ do a priori not correspond to (ground state) eigenvectors of the original Hamiltonian $H_{1/N}$ with the same eigenvalue. This only happens when the (ground state) eigenvector is permutation-invariant, i.e. an element of $\text{Sym}^N(\mathbb{C}^2)$ as is for example the case for the Curie-Weiss model. Then, clearly for each $N\in\mathbb{N}$ the corresponding algebraic pure state $\omega_N$ is $G$-invariant, implying there is no symmetry breaking for any finite $N$. However, the classical limiting state which also qualifies as a ground state in the algebraic sense \cite{vandeVen}, is invariant but not pure. In view of the above this explains the presence of SSB in the classical limit.
In the next corollary we give an illustrating example.
\begin{corollary}\label{classicallimitcor}
The Curie-weiss model $H_{1/N}^{CW}$ defined in \eqref{CWhamscaled} admits a classical limit which breaks the $\mathbb{Z}_2$-symmetry in the limit as $N\to\infty$.
\end{corollary}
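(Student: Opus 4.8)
The plan is to verify that the Curie--Weiss Hamiltonian $H_{1/N}^{CW}$ meets all the hypotheses of Assumption \ref{symmetry} and Theorem \ref{THMMAIN}, and then to read off the conclusion from the SSB criterion stated just above, namely that an invariant-but-mixed limiting ground state signals spontaneous symmetry breaking. First I would recall from Example 1 that $H_{1/N}^{CW}|_{\text{Sym}^N(\mathbb{C}^2)} = Q_{1/N}(h_N)$ with classical symbol $h_N = h_0^{CW} + N^{-1}(-3Jz^2+1)$ and principal symbol $h_0^{CW}(\theta,\phi) = -(\tfrac{J}{2}\cos^2\theta + B\sin\theta\cos\phi)$. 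I would then take $E = \min(h_0^{CW})$ and analyze the critical points of $h_0^{CW}$ on $S^2$: for the ferromagnetic regime ($J>0$) with $|B| < J$ one finds exactly two distinct non-degenerate minima $\Omega_\pm$, lying symmetrically with respect to the reflection $\phi \mapsto \phi$ (equivalently $x_1 \mapsto x_1,\ x_3\mapsto x_3$ with the appropriate sign flip realizing the $\mathbb{Z}_2$ that leaves $h_0^{CW}$ invariant). Checking non-degeneracy is a direct Hessian computation; by the Morse Lemma the minima are isolated, so $A(E) = \{\Omega_-,\Omega_+\}$ and $n=2$.

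Next I would identify the group $G = \mathbb{Z}_2$ acting by this reflection: it is a group of isometries of $S^2$ acting continuously, $h_0^{CW}$ is $G$-invariant, and the orbit $\mathcal{O}(\Omega_-) = \{\Omega_-,\Omega_+\}$. The remaining — and genuinely model-specific — input is that the ground state eigenvector $\psi_N$ of $H_{1/N}^{CW}$ is unique up to phase and $G$-invariant for each $N$, i.e. $\Psi(g^{-1}\Omega) = \Psi(\Omega)$. Uniqueness follows because $H_{1/N}^{CW}$ restricted to $\text{Sym}^N(\mathbb{C}^2)$ is (after a basis choice) a real tridiagonal Jacobi-type matrix with nonzero off-diagonal entries, hence its lowest eigenvalue is simple by a Perron--Frobenius argument; $G$-invariance of the ground state is then automatic since $H_{1/N}^{CW}$ commutes with the unitary implementing $g$ and the ground eigenspace is one-dimensional. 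Here I would cite \cite{VGRL18} for the detailed discussion already flagged in the text under ``Example 1 revisited''. This verifies Assumption \ref{symmetry}; moreover by Remark \ref{spectrumeigenvalues} (applied to $h_N$) the eigenvalue sequence $\lambda_N$ of $\psi_N$ satisfies $|\lambda_N - E| \to 0$, so the hypothesis of Theorem \ref{THMMAIN} is met.

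Applying Theorem \ref{THMMAIN} then gives
\begin{align}
\omega_0(f) = \lim_{N\to\infty}\langle\psi_N, Q_{1/N}(f)\psi_N\rangle = \tfrac{1}{2}\big(f(\Omega_-) + f(\Omega_+)\big) \qquad (f\in A_0).
\end{align}
Since $\Omega_- \neq \Omega_+$, this state is a nontrivial convex combination of the two pure evaluation states $\delta_{\Omega_-}$ and $\delta_{\Omega_+}$, hence mixed; on the other hand it is manifestly $\mathbb{Z}_2$-invariant because $G$ permutes $\Omega_-$ and $\Omega_+$. As explained in Subsection \ref{SSB1}, the classical limit state $\omega_0$ qualifies as a ground state in the algebraic sense \cite{vandeVen}, and being invariant-but-mixed it exhibits spontaneous symmetry breaking of the $\mathbb{Z}_2$-symmetry; equivalently, each finite-$N$ pure ground state $\omega_N$ is $\mathbb{Z}_2$-invariant, but the limit fails to be pure. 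I expect the main obstacle to be the model-specific claim that the finite-volume ground state is simple and permutation-(hence $G$-)invariant — i.e. that it genuinely lies in $\text{Sym}^N(\mathbb{C}^2)$ and carries the symmetry — rather than anything in the abstract limiting machinery; this is precisely the point where one leans on the Perron--Frobenius/Jacobi-matrix structure of $H_{1/N}^{CW}$ and on the analysis of \cite{VGRL18}, and it is the reason the hypothesis $|B|<J$ (ensuring two genuine minima rather than one) is needed for SSB to actually occur.
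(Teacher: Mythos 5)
Your proposal is correct and follows essentially the same route as the paper: verify Assumption \ref{symmetry} for the Curie--Weiss model (citing \cite{LMV,VGRL18} for uniqueness and $\mathbb{Z}_2$-invariance of the finite-$N$ ground state), apply Theorem \ref{THMMAIN} to obtain $\omega_0 = \tfrac{1}{2}(\delta_{\Omega_-}+\delta_{\Omega_+})$, and conclude SSB from the invariant-but-mixed criterion of Subsection \ref{SSB1}. Your explicit remark that the regime $|B|<J$ is needed for two distinct non-degenerate minima is a useful clarification that the paper leaves implicit (its stated minima $\Omega_\pm=(1/2,0,\pm\sqrt{3}/2)$ correspond to $B/J=1/2$).
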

\begin{proof}
In \cite{LMV,VGRL18} it has been shown that for any finite $N$ the ground state eigenvector $\psi_N$ of $H_{1/N}^{CW}$ is (up to phase) unique and lies in the symmetric subspace $\text{Sym}^N(\mathbb{C}^2)$. In view of Example $1$ in Subsection \ref{Mean field theories and symbol}, for any finite $N$ the vector $\psi_N$ is the ground state of $Q_{1/N}(h_N)=H_{1/N}^{CW}|_{\text{Sym}^N(\mathbb{C}^2)}$, where $h_N=h_0^{CW}+N^{-1}(-3Jz^2+1)$. By Example 1 and Theorem \ref{THMMAIN} this sequence admits a classical limit. In particular this limit is given by $\omega_0(f)=\frac{1}{2}(\omega_0^-(f)+\omega_0^+(f))$, where $\omega_0^{\pm}(f)=f(\Omega_{\pm})$ denotes the pure states localized at the two minima $\Omega_{\pm}:=(1/2,0,\pm\sqrt{3}/2)$ of the classical Curie-Weiss model $h_0^{CW}$. The state $\omega_0$ is a mixed state which also qualifies as a ground state, and it is obviously invariant under reflexion $\mathbb{Z}_2$- symmetry: $(x,y,z)\mapsto (x,-y,-z)$ (we refer to the remarks in \cite[Section 4.1]{LMV} for details). In summary, for any finite $N$ the algebraic ground state $\omega_{N}(\cdot)=\langle\psi_N,(\cdot)\psi_N\rangle$ is pure and $\mathbb{Z}_2$-invariant, whilst the limiting state $\omega_0$ is invariant but mixed. It may be clear that no pure invariant ground states of the classical Curie-Weiss model exist, since $\omega_0^{\pm}$ is mapped to $\omega_0^{\mp}$ under this symmetry. In view of the above dicussion the $\mathbb{Z}_2$-symmetry is therefore spontaneously broken in the limit $N\to\infty$. 
\end{proof}
\section*{Acknowledgments} The author is very grateful to Valter Moretti for the useful technical discussions, and also to Stéphane Nonnenmacher and Alix Deleporte and for their comments. The author also thanks the referees for their remarks. The author is a Marie Sk\l odowska-Curie fellow of the Istituto Nazionale di Alta Matematica and is funded by the INdAM Doctoral Programme in Mathematics and/or Applications co-funded by Marie Sk\l odowska-Curie Actions, INdAM-DP-COFUND-2015, grant number 713485. 
\appendix
\section{Appendices}
\subsection{Toeplitz quantization}\label{appA}
The following definition can also be found in \cite{Porte}. We shall briefly summarize the main parts.
\begin{definition}
\begin{enumerate} 
\item  A compact Kähler manifold $(M, J, \omega)$ is said to be quantizable when
the symplectic form $\omega$ has integer cohomology: there exists a unique Hermitian line bundle $(L, h)$ over $M$ such that the curvature of $h$ is $-2i\pi\omega$. This line bundle is called the prequantum line bundle over $(M, J, \omega)$. The manifold $(M, J, \omega)$ is said to be real-analytic when $\omega$
(or, equivalently, $h$) is real-analytic on the complex manifold $(M, J)$.
\item Let $(M, J, \omega)$ be a quantizable compact Kähler manifold with $(L, h)$
its prequantum bundle and let $N\in\mathbb{N}$.
\\
The Hilbert space $L^2(M,L^{\otimes N})$ is the closure of $C^{\infty}(M, L^{\otimes N} )$,
the space of smooth sections of the $N$-th tensor power of $L$. From the Hermitian metric $h$ on $L$, one deduces a Hermitian metric $h_N$ on $L^{\otimes N}$.  If $u,v$ are sections of $L^{\otimes N}$, the scalar product is defined as
\begin{align}
\langle u,v\rangle_{N}=\int_M\langle u, v\rangle_{h_N}\frac{\omega^{\wedge \text{dim}_{\mathbb{C}}M}}{(\text{dim}_{\mathbb{C}}M)!},
\end{align}
where $\langle u,v\rangle_{h_N}=h_N(u_1 \otimes\cdot\cdot\cdot\otimes u_N,  v_1 \otimes\cdot\cdot\cdot\otimes v_N)=\Pi_{m=1}^N h(u(m), v(m))$.
\item  The Hardy space  is the space of holomorphic sections of $L^{\otimes N}$, denoted by
denoted $H_0(M,L^{\otimes N})$. It is a finite-dimensional closed subspace of $L^2(M,L^{\otimes N})$. The Bergman projector $S_N$ is the orthogonal projector from the space of square-integrable sections $L^2(M,L^{\otimes N})$ to $H_0(M,L^{\otimes N})$.
If $f\in C^{\infty}(M,\mathbb{C})$, then the {\bf contravariant Toeplitz operator} $T_N(f) : H_0(M,L^{\otimes N})\to H_0(M,L^{\otimes N})$ is defined as
\begin{align}
T_N(f)u = S_N(fu).
\end{align}
\end{enumerate}
\end{definition}
\subsection{The case $M=S^2$}\label{appA1}
We consider the projective complex line $M=\mathbb{C}\mathbb{P}^1$ which is a Kähler manifold endowed with the Fubini-Study form $\omega_{FS}$, given by
\begin{align}
\omega_{FS}=\frac{dz\land d\overline{z}}{(1+|z|^2)^2},
\end{align}
with associated Kähler potentail $\phi$ defined by
\begin{align}
\phi(z,\overline{z})=\log{(1+|z|^2)}.
\end{align}
We can identify $\mathbb{C}\mathbb{P}^1\cong \mathbb{S}^2$, the Riemann sphere $\mathbb{S}^2$. Note that the Riemann sphere $\mathbb{S}^2$ is canonically embedded in $\mathbb{R}^3$ as the unit $2$-sphere: $S^2=\{(x,y,z) \ |\ x^2+y^2+z^2=1\}$. It is clear that $M$ is compact and quantizable since by a computation one finds that the integral of $\omega_{FS}$ on $\mathbb{C}\mathbb{P}^1$ is $2\pi$ times an integer. It turns out that the prequantum line bundle $L$ over $M=\mathbb{C}\mathbb{P}^1$ is the dual of the tautological bundle, that is, $L = O(1)$. In this case $L^{\otimes N} = O(N)$. In particular the space of holomorphic sections  of $L^{\otimes N}$ is identified with
the space of holomorphic square integrable functions on $\mathbb{C}$ with respect to the scalar product defined below. Indeed, unfolding the definitons for the Kahler manifold $M$, we find that the following scalar product indentifies a Hilbert space structure on $H_0(M,L^{\otimes N})$:
\begin{align}
\langle f, g\rangle=\int_{\mathbb{C}}\frac{f(z)g(z)}{(1+|z|^2)^{N+2}}d^2z,  
\end{align}
where $f,g$ are holomorphic functions on $\mathbb{C}$. We should remark that only holomorphic polynomials of degree $\leq N$ have a finite $L^2$-norm for this scalar product. We therefore only consider polynomials. An orthonormal basis consists of the monomials
\begin{align}
e_k=\sqrt{\frac{N+1}{\pi}}\sqrt{\binom{N}{k}}X^k, \ \ (k=0,...,N).
\end{align}
Analogously, one can define an inner product by absorbing the factor $\frac{N+1}{\pi}$, i.e., one considers the basis given by $w_k=\sqrt{\binom{N}{k}}X^k$ and scalar product
\begin{align}
\langle p_1, p_2\rangle_{\mathcal{F}\mathcal{B}^N}:=\frac{N+1}{\pi}\int_{\mathbb{C}}\frac{\overline{p_1(z)}p_2(z)}{(1+|z|^2)^{N+2}}d^2z. \label{scalar1}
\end{align}
This scalar product induces a $(N+1)$- dimensional Hilbert space, called the {\em Fock-Bargmann space $\mathcal{F}\mathcal{B}^N$} \cite{Com,Gaz}.
Consider now a vector $\psi\in\mathbb{C}^{N+1}$. To this vector we associate the function
\begin{align}
\Psi(z)=\langle v^{(z)},\psi\rangle,
\end{align}
where $|v^{(z)}\rangle$ i  a coherent state in complex coordinate $z$ (playing the same role as $|v^{(\Omega)}\rangle$). It can be shown that the function $\Psi(z)$ assumes the form 
\begin{align}
\Psi(z)=(1+|z|^2)^{-N/2}p(z), \label{formpsi}
\end{align}
where $p(z)$ is an analytic polynomial of dergee $\leq N$ in the Fock Bargmann space $\mathcal{F}\mathcal{B}^N$. Also the functions $\Psi(z)$ of the type \eqref{formpsi} form a finite-dimensional Hilbert space of dimension $N+1$, which are square integrable with respect to the scalar product
\begin{align}
\langle \Psi, \Phi\rangle_{\mathbb{C}^{N+1}}:=\frac{N+1}{\pi}\int_{\mathbb{C}}\frac{\overline{\Psi(z)}\Phi(z)}{(1+|z|^2)^{2}}d^2z.
\end{align}
This clearly implies that $||p||_{\mathcal{F}\mathcal{B}^N}=||\Psi||_{\mathbb{C}^{N+1}}$. In fact, the map $\mathbb{C}^{N+1}\ni\psi\mapsto\Psi(z)\in\mathcal{F}\mathcal{B}^N$ is an isometry. Therefore, for any vectors $\psi,\phi\in\mathbb{C}^{N+1}$, one has
\begin{align}
\langle\psi,\phi\rangle_{\mathbb{C}^{N+1}}&=\frac{N+1}{\pi}\int_{\mathbb{C}}\frac{dz^2}{(1+|z|^2)^{2}}\overline{\Psi(z)}\Phi(z)\\&=\frac{N+1}{4\pi}\int_{S^2}d\Omega\overline{\Psi(\Omega)}\Phi(\Omega),
\end{align}
where $\Psi(z)$ defined above, and $\Psi(\Omega)=\langle\Omega,\psi\rangle$ and $\Phi(\Omega)=\langle\Omega,\psi\rangle$. The measure $d\Omega$ is the usual measure on the $2$-sphere. We now state a result obtained by Deleporte \cite{Porte}.
\begin{theorem}[Deleporte]\label{Portethm1}
Let $f$ be a real-analytic, real-valued function on $S^2$ and $E\in\mathbb{R}$. Let
$(\psi_N)_N$ be a normalized sequence of $(\lambda_N)_N$-eigenstates of $T_N (f)$ with $\lambda_N\to E$ as $N\to\infty$. Then, for every open set $V$ at positive distance from $\{f = E\}$
there exist positive constants $c$, $C$ such that, for every $N\geq 1$, one has
\begin{align}
||\psi_N||_{L^2(V)}\leq C\exp{(-cN)},
\end{align}
where the norm is the one induced by the scalar product \eqref{scalar1}.
We say informally that, in the forbidden region $\{f \neq E\}$, the sequence
$(\psi_N )_N$ has an exponential decay rate.
\end{theorem}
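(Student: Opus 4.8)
The plan is to prove the stated exponential decay by a \emph{moment method}, reducing the $L^2(V)$ mass of $\psi_N$ to very high moments of the symbol against the Husimi measure and then exploiting the exactness of the eigenvalue equation. First I would record two facts specific to $S^2=\mathbb{CP}^1$. Since the contravariant Toeplitz operator $T_N(f)u=S_N(fu)$ built from the Bergman projector coincides, for these coherent states, with the Berezin quantization $Q_{1/N}(f)$ of \eqref{defquan3} (up to the normalization fixed by $T_N(1_{S^2})=1$), one has the covariant-symbol identity
\[
\langle\psi_N,T_N(g)\psi_N\rangle=\int_{S^2}g\,d\mu_{\psi_N},\qquad d\mu_{\psi_N}(\Omega)=\frac{N+1}{4\pi}|\langle v^{(\Omega)},\psi_N\rangle|^2\,d\Omega,
\]
for every real polynomial or analytic $g$, and $\|\psi_N\|_{L^2(V)}^2=\mu_{\psi_N}(V)$. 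Because $V$ lies at positive distance from $\{f=E\}$ and $\lambda_N\to E$, there is a $\delta>0$ with $|f-\lambda_N|\ge\delta$ on $V$ for all large $N$.

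Next I would run the moment estimate. For every even exponent $m=2k$ the function $(f-\lambda_N)^{2k}$ is nonnegative and $\ge\delta^{2k}$ on $V$, so the covariant identity gives $\delta^{2k}\mu_{\psi_N}(V)\le\int_{S^2}(f-\lambda_N)^{2k}\,d\mu_{\psi_N}=\langle\psi_N,T_N((f-\lambda_N)^{2k})\psi_N\rangle$. On the other hand $\psi_N$ is an \emph{exact} $\lambda_N$-eigenstate, so $(T_N(f)-\lambda_N)\psi_N=0$ and hence $\langle\psi_N,(T_N(f)-\lambda_N)^{2k}\psi_N\rangle=0$. Subtracting, the whole problem is reduced to controlling the \emph{composition defect}
\[
\delta^{2k}\,\mu_{\psi_N}(V)\ \le\ \Big|\big\langle\psi_N,\big[\,T_N\big((f-\lambda_N)^{2k}\big)-\big(T_N(f)-\lambda_N\big)^{2k}\,\big]\psi_N\big\rangle\Big|,
\]
uniformly out to exponents $m=2k$ of order $N$.

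The heart of the argument is then a Gevrey bound on this defect. I would use the Berezin--Toeplitz composition expansion $T_N(a)T_N(b)=T_N\big(\sum_{\ell\ge0}N^{-\ell}C_\ell(a,b)\big)$, where $C_\ell$ is a bidifferential operator of order $2\ell$ with $C_0(a,b)=ab$, and telescope it over the $2k$ factors. For a merely smooth symbol this telescoping loses control, because the seminorms of $(f-\lambda_N)^{j}$ entering $C_\ell$ grow too fast with $j$; this is exactly why only the algebraic rate $O(1/N)$ (the $m=2$ case, which is all that Theorem \ref{THMMAIN} actually needs) survives in the smooth category. The point of \emph{real-analyticity} of $f$ is that its derivatives obey $|\partial^\alpha f|\le A\,\rho^{-|\alpha|}|\alpha|!$, so the order-$\ell$ contributions are bounded by a Gevrey expression of the schematic form $(Cm)^{2\ell}(2\ell)!\,N^{-\ell}$; tracking the combinatorics of distributing $2\ell$ derivatives over the $m$ slots yields a defect bound of the form $(Ck^2/N)^{k}$. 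Inserting this and optimizing the free integer $k$ (choosing $k$ near the minimizer of $\delta^{-2k}(Ck^2/N)^k$, i.e.\ $k\sim\epsilon N$) gives $\delta^{-2k}\times(\text{defect})\le Ce^{-cN}$, whence $\mu_{\psi_N}(V)\le Ce^{-cN}$ and the claimed $\|\psi_N\|_{L^2(V)}\le C'e^{-c'N}$.

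The main obstacle is precisely this last analytic input: one must resum the Berezin--Toeplitz expansion with an \emph{exponentially} small remainder out to order $\sim N$, which amounts to constructing an analytic (Gevrey-class) Bergman kernel with exponential error bounds --- the technical core of Deleporte's work \cite{Porte}, resting on the real-analyticity of both the Kähler potential $\phi(z,\bar z)=\log(1+|z|^2)$ and the symbol $f$. For $S^2$ the off-diagonal input can be made fully explicit through the coherent-state overlap $|\langle v^{(\Omega)},v^{(\Omega')}\rangle|=((1+t)/2)^{N/2}$ (with $t$ the cosine of the angle between the two states), already used in the lemma preceding Theorem \ref{THMMAIN}, which supplies the exponential off-diagonal decay of $S_N$ underlying the defect bound. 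An alternative route avoiding high moments altogether is an Agmon-type estimate, conjugating $T_N(f)-\lambda_N$ by a weight $e^{N\phi}$ with $\phi$ satisfying an eikonal inequality relative to $f-E$ and using coercivity away from $\{f=E\}$; this, however, hinges on the same analytic kernel control, so I would treat the moment method above as the primary line of attack.
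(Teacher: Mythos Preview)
The paper does not prove this theorem; it is quoted from Deleporte \cite{Porte}, and the discussion following it only sketches the mechanism in order to extend it to symbols $f_N=f_0+\sum_k N^{-k}f_k$. That sketch is a \emph{parametrix} argument: pick a cutoff $a$ with $a=1$ on $V$ and $\mathrm{supp}(a)\cap\{f=E\}=\emptyset$, construct on a slightly larger set $W\subset\{f\neq E\}$ an analytic covariant symbol $g$ so that $T_N^{cov}(g)$ is a microlocal inverse of $T_N(f-\lambda_N)$, and prove the exponentially accurate identity $T_N(a)\,T_N^{cov}(g)\,T_N(f-\lambda_N)=T_N(a)+O(e^{-cN})$. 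Applying this to $\psi_N$ kills the left-hand side and gives $\langle\psi_N,T_N(a)\psi_N\rangle=O(e^{-cN})$, hence $\|\psi_N\|_{L^2(V)}^2\le\langle\psi_N,T_N(a)\psi_N\rangle=O(e^{-cN})$.

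Your moment method is a genuinely different route, and the reduction to the composition defect $T_N((f-\lambda_N)^{2k})-(T_N(f)-\lambda_N)^{2k}$ is correct and attractive. But the quantitative step contains an error. With your stated defect bound $(Ck^2/N)^k$, the quantity to minimize is
\[
\delta^{-2k}\Big(\tfrac{Ck^2}{N}\Big)^{k}=\exp\!\Big(k\log\tfrac{Ck^2}{\delta^2 N}\Big),
\]
whose minimizer in $k$ is at $k\sim \delta\sqrt{N/C}$ (set the derivative of $k\log(Ck^2/\delta^2N)$ to zero), not at $k\sim\varepsilon N$ as you assert. Plugging $k\sim c\sqrt{N}$ back in yields a bound of order $e^{-c'\sqrt{N}}$, not $e^{-cN}$. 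So either your schematic defect bound is too crude and must be upgraded to something of the shape $(Ck/N)^k$ (for which the optimizer is indeed $k\sim\varepsilon N$ and the output is $e^{-cN}$), or the moment method as written does not reach the exponential rate. Establishing the sharper defect bound with $k$ allowed up to order $N$ is essentially equivalent to the analytic symbolic calculus that underlies the parametrix construction; you correctly flag this as the ``main obstacle'', but you should be aware that the numerics you wrote down do not close. The parametrix route in the paper avoids this optimization entirely: it builds the inverse once, at the level of symbols, and the exponential remainder comes directly from the analytic Bergman kernel, so no balancing of $k$ against $N$ is needed.
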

Let us now take a function of the form $f_N=f_0+\sum_{k=1}^MN^{-k}f_k$, where $f_0$ and $f_k \ (k=1,...,M)$ are real polynomials on $S^2$, and let us consider the operator $T_N (f_N)$ with $(\psi_N)_N$ a normalized sequence of eigenstates with corresponding eigenvalues $(\lambda_N)_N$, such that  $\lambda_N\to E$ as $N\to\infty$.  Modifying the proof of the above theorem still results in convergence of the sequence of eigenvectors but with another decay rate, i.e. one obtains that for every open set $V$ at positive distance from $\{f_0 = E\}$, there holds $||\psi_N||_{L^2(V)}=O(1/N)$.
In order to see this, let us first go back to the case when $f=f_0$. Taking a glance at the proof of Theorem \ref{Portethm1} (applied to $T_N(f_0)$, with eigenstates $\phi_N$ and eigenvalues $\epsilon_N$) one first defines a function $a\in C^{\infty}(S^2,\mathbb{R}_+)$ such that $\text{supp}(a)\cap \{f_0=E\}=\emptyset$, and $a=1$ on $V$. Moreover one considers a neighborhood $W$ such that  $\text{supp}(a)\subset\subset W\subset\subset \{f_0 \neq E\}$. On $W$ the function $f_0-E$ is bounded away from zero and one can consider the analytic covariant symbol $g$ which is such that covariant Topelitz operator $T_N^{cov}(g)$ is the analytic inverse on this neighborhood of $T_N(f_0-\lambda_N)$. One can prove that this symbol is well-defined and that $T_N^{cov}(g)$ is indeed the mircolocal inverse of $T_N(f_0-\lambda_N)$. Then it has been shown that
\begin{align}\label{symbolineq}
T_N(a)T_N^{cov}(g)T_N(f_0-\lambda_N)=T_N(a)+O(e^{-cN}),
\end{align}
for some small $c>0$.
This implies that the sequence of eigenstates $\phi_N$ of $T_N(f_0)$ satisfies 
\begin{align}
0=T_N(a)\phi_N+O(e^{-cN}),
\end{align}
so that  $T_N(a)\phi_N=O(e^{-cN})$. In particular, 
\begin{align}\label{estimateina}
&||\phi_N||^2_{L^2(V)}=\frac{N+1}{4\pi}\int_V|\Phi_N(\Omega)|^2d\Omega\leq\nonumber\\&\frac{N+1}{4\pi}\int_{S^2}|\Phi_N(\Omega)|^2a(\Omega)d\Omega=\nonumber\\&\langle\phi_N,T_N(a)\phi_N\rangle=O(e^{-cN}).
\end{align}
In the general case, where $f_N=f_0+\sum_{k=1}^MN^{-k}f_k$, clearly $||T_N(f_0)\psi_N||= ||T_N(f_N)\psi_N - \sum_{k=1}^MN^{-k}T_N(h_k)\psi_N||\leq ||T_N(f_N)\psi_N|| + C_1/N$, for some $C_1>0$. By a similar argument as in the proof of Corollary \ref{spectrumcor} also $|\lambda_N-\epsilon_N|\leq C_2/N$, for some $C_2>0$. Then, on the one hand by \eqref{symbolineq} we obtain
\begin{align}
T_N(a)\psi_N=T_N(a)T_N^{cov}(g)T_N(f_0-\epsilon_N)\psi_N+O(e^{-cN})\psi_N.
\end{align}
On the other hand, 
\begin{align}
&||T_N(a)T_N^{cov}(g)T_N(f_0-\epsilon_N)\psi_N||\nonumber=\\&||T_N(a)T_N^{cov}(g)T_N(f_N-\sum_{k=1}^MN^{-k}f_k-\lambda_N+\lambda_N-\epsilon_N)\psi_N||\nonumber\leq\\&||T_N(a)T_N^{cov}(g)T_N(f_N-\lambda_N)\psi_N|| + \sum_{k=1}^MN^{-k}||T_N(a)T_N^{cov}(g)T_N(f_k)\psi_N||\nonumber+\\ & |\lambda_N-\epsilon_N| ||T_N(a)T_N^{cov}(g)\psi_N||\nonumber\leq\\&\frac{\tilde{C}_1}{N}+\frac{\tilde{C}_2}{N},
\end{align}
using that $T_N(a)T_N^{cov}(g)T_N(f_N-\lambda_N)\psi_N=0$, $\sum_{k=1}^MN^{-k}||T_N(a)T_N^{cov}(g)T_N(f_k)\psi_N||	\leq \tilde{C}_1/N$, and similarly that $ ||(\lambda_N-\epsilon_N)T_N(a)T_N^{cov}(g)\psi_N||\leq \tilde{C}_2/N$ where the constants $\tilde{C}_1,\tilde{C}_2$ do not depend on $N$. Combining the above equations yields
\begin{align}
T_N(a)\psi_N=O(1/N).
\end{align}
Repeating the same argument as in \eqref{estimateina} gives the estimate $||\psi_N||^2_{L^2(V)}=O(1/N).$
\\\\
Furthermore, it can be shown that for any smooth $f$ on the sphere $S^2$, one has $Q_{1/N}(f)=T_N(f)$, where  $T_N(f)$ is the Toeplitz operator with symbol $f$ and the quantization maps $Q_{1/N}$ are defined by \eqref{defquan3} (see e.g. \cite[Prop. 6.8]{SCHL} for a proof and general approach). This means that the above machinery is perfectly applicable to mean-field qauntum spin systems.
\subsection{Coherent spin states}\label{appB}
 If $|\!\uparrow\rangle, |\!\downarrow\rangle$ are the eigenvectors of $\sigma_3$ in $\mathbb{C}^2$, so that
$\sigma_3|\!\uparrow\rangle=|\!\uparrow\rangle$ and $\sigma_3|\!\downarrow\rangle=- |\!\downarrow\rangle$, and where  $\Omega \in {S}^2$, with  polar angles  
$\theta_\Omega \in (0,\pi)$, $\phi_\Omega \in (-\pi, \pi)$, we then define the unit vector
\begin{align}\label{om1}
|v^{(\Omega)}\rangle_1  = \cos \frac{\theta_\Omega}{2} |\!\uparrow\rangle + e^{i\phi_\Omega}\sin   \frac{\theta_\Omega}{2} |\!\downarrow\rangle.
\end{align}
 If $N \in \mathbb{N}$, the associated {\bf $N$-coherent spin state} $|v^{(\Omega)}\rangle_N\in \mathrm{Sym}^N(\mathbb{C}^2)$, equipped with the usual scalar product $\langle \cdot ,\cdot \rangle_N$ inherited from    
$\bigotimes_{n=1}^N\mathbb{C}^2$,  is defined as follows \cite{Pe72}:
\begin{align}\label{om2}
|v^{(\Omega)}\rangle_N =  \underbrace{|v^{(\Omega)}\rangle_1 \otimes \cdots \otimes |v^{(\Omega)} \rangle_1}_{N \: times}.
\end{align}
An important property relevant for our computations was established in \cite{LMV} 
\begin{align}
f(\Omega)=\lim_{N\to\infty}\frac{N+1}{4\pi}\int_{S^2}d\Omega' f(\Omega')|\langle v^{(\Omega)},v^{(\Omega')} \rangle_N|^2, \ \ (f\in C(S^2))\:.\label{Berezinprop}
\end{align}
\subsection{Spontaneous symmetry breaking}\label{appC}
We stress that at first sight spontaneous symmetry breaking seems to be a paradoxical phenomenon: in nature, finite quantum systems, such as crystals, evidently display it, yet in theory it seems forbidden in such systems, since ground states of a finite quantum system are invariant under some symmetry inherited from the properties of the Hamiltonian. As argued in \cite{Lan17} this means that some approximate and robust form of symmetry breaking should already occur in large but finite systems, despite the fact that uniqueness of the ground state seems to forbid this. SBB in a classical system or an infinite quantum system should therefore be foreshadowed in the quantum system whose classical limit it is, at least for tiny but positive values of Planck’s constant $\hbar$. In order accomplish this it must shown that for finite $N$ or $\hbar>0$ the system is not in its ground state, but in some other state having the property that as $N\to\infty$ or $\hbar\to 0$ it converges in a suitable sense to a symmetry-broken ground state of the limit system, which is either an infinite quantum system or a classical system. 
Since the symmetry of a state is preserved under the limits in question (provided these are taken correctly), this implies that the actual physical state at finite $N$ or $\hbar>0$ must already break the symmetry. A possible mechanism to achieve this (originating with Anderson \cite{And}) is based on forming symmetry-breaking linear combinations of low-lying states whose energy difference vanishes in the pertinent limit. Another idea, dated to Jona-Lasinio \cite{Jona} is based on a symmetry-breaking perturbation of the ground state which should be small to begin with, and should disappear in the pertinent limit. Both mechanisms share the property that the pure ground state (of the perturbed Hamiltonian) converges to a symmetry-breaking pure ground state on the limit system (be it a classical system or an infinite quantum system). This would explain the fact that in nature one of the pure symmetry-breaking states $\omega_i(f):=f(\Omega_i)$ is found, rather than the unphysical mixture $\omega_0(f)=\frac{1}{n}\sum_{i=1}^nf(\Omega_i)$ as originally predicted by the theory (viz. e.g. Theorem \ref{THMMAIN}). We refer to \cite{VGRL18} for a detailed discussion.
\section{Data availability statement}
The data that support the findings of this study are available from the corresponding author upon reasonable
request.

\end{document}